\newtheorem{theorem}{Theorem}
\newtheorem{lemma}{Lemma}
\newtheorem{remark}{Remark}
\begin{document}
\renewcommand{\BBAA}{\&}
\renewcommand{\BBAB}{\&}

\author{
\name{G.~Portilla\textsuperscript{a}, I.V.~Alexandrova\textsuperscript{b}\thanks{CONTACT I.V.~Alexandrova. Email: i.v.aleksandrova@spbu.ru}, S.~Mondié\textsuperscript{a} and A.P.~Zhabko\textsuperscript{b}}
\affil{\textsuperscript{a}Department of Automatic Control,
        CINVESTAV-IPN, 07360 Mexico D.F., Mexico;\\ \textsuperscript{b}Department of Control Theory,
        St. Petersburg State University,\\ 7/9 Universitetskaya nab., 199034, St. Petersburg, Russia}
}

\title{Estimates for solutions of homogeneous time-delay systems: Comparison of Lyapunov--Krasovskii and Lyapunov--Razumikhin techniques}

\maketitle

\begin{abstract}
In this contribution, the estimates for the response of time delay systems with nonlinear homogeneous right-hand side of degree strictly greater than one are constructed.  The existing results  obtained  via  the  Lyapunov--Razumikhin  approach  are reminded.  Their proofs, revisited in the appendix, lead to explicit expressions of the involved constants. Based on a recently introduced Lyapunov--Krasovskii functional and known estimates of the domain of attraction, we present new estimates of the system response. We compare both approaches and discuss the illustrative examples. 
\end{abstract}

\begin{keywords}
homogeneous time-delay systems, estimates for solutions, Lyapunov--Krasovskii functionals
\end{keywords}


\section{Introduction}
Stability analysis via the linear approximation, when it is nonsingular, is usually the first choice method for studying nonlinear systems \citep{halanay1966differential}.
This is true in particular for systems with delays that have received outstanding attention in the past decades \citep{gu2003stability}.
When the linear approximation is zero, it is possible to perform the analysis by the first homogeneous approximation. This strategy  has lead to significant contributions to the design of controllers  \citep{hermes1991homogeneous}, and to robust stability analysis \citep{rosier1992homogeneous}.\\
Authors usually favour the Lyapunov--Razumikhin framework for studying the homogeneous time delay systems \citep{efimov2014}. In particular, delay-independent stability of homogeneous systems is addressed in \citet{aleksandrov2012asymptotic,aleksandrov2014delay,efimov2016}. The approach was applied to the estimation of the system response in \citet{aleksandrov2012asymptotic} and of the attraction region in \citet{aleksandrov2014delay}. It indeed appears natural to use the Lyapunov function of the corresponding delay-free system when applying the Lyapunov--Razumikhin approach to the homogeneous delay system. This strategy has established a remarkable result: if the delay-free system is asymptotically stable, then the trivial solution of the homogeneous delay system is asymptotically stable for all delays when the homogeneity degree is strictly greater than one \citep{aleksandrov2012asymptotic}.\\
A Lyapunov--Krasovskii functional for systems of homogeneity degree strictly greater than one was recently introduced in
\cite{Voronezh, alexandrova2019lyapunov}. It is built from the homogeneous delay free system
Lyapunov function, following ideas of the so-called complete type
functionals approach \citep{kharitonov2013time}. It has been successfully used to present
necessary and sufficient conditions for local stability and to give estimates of the
region of attraction \citep{alexandrova2019lyapunov}.
\\
The same case of homogeneity degree strictly greater than one is addressed in this contribution.
Our aim is to find estimates of  the solutions via the  Lyapunov--Krasovskii functional introduced in \citet{Voronezh, alexandrova2019lyapunov} and to compare them with those obtained via the  Lyapunov--Razumikhin approach \citep{aleksandrov2012asymptotic,aleksandrov2016asymptotic}.
Comparison of the domains of attraction obtained via both approaches \citep{aleksandrov2014delay,alexandrova2019lyapunov} is also presented. It is worth clarifying that the previously mentioned results concern local asymptotic stability, hence so do the presented results. For global asymptotic stability, stronger conditions, as those presented in \cite{efimov2014}, are required. Existing results derived via the Razumikhin theorem are revisited in the appendix. In contrast with the original works \citep{aleksandrov2012asymptotic,aleksandrov2014delay,aleksandrov2016asymptotic}, explicit formulae for all the involved constants are given, thus allowing the comparison. \\
To the best of our knowledge, few studies address direct practical comparison of the Lyapunov--Krasovskii and the Lyapunov--Razumikhin approaches. This work may be considered as a case study in this direction. Moreover, it is often assumed that the Lyapunov--Krasovskii approach makes possible the presentation of quantitative estimates of the convergence rate, whereas the Lyapunov--Razumikhin method only gives qualitative bounds \citep{efimov2014automatica}. Our analysis shows that for homogeneous time delay systems this does not hold. The structure of the estimates obtained via both approaches is the same, and they both depend on the Lyapunov function of the original system with zero delay. This common feature allows us to compare the involved constants directly. Furthermore, since the Lyapunov--Krasovskii functionals in use are those corresponding to the necessary and sufficient stability conditions, the comparison is expected to be revealing, in line with the linear case where the use of exact, prescribed derivative, functionals may lead to obtention of tighter estimates. The conclusion is quite surprising: although the estimate of the domain of attraction may be less conservative for the Lyapunov--Krasovskii approach,
the estimate of the response obtained via Lyapunov--Razumikhin one turns out to be much closer to the system response.
\\
The contribution is organised as follows. In Section II, basic definitions and properties of homogeneous systems are
introduced. We also present the existing estimates of the domain of attraction and of convergence rate obtained via the Lyapunov--Razumikhin approach. Estimates of the solutions using the Lyapunov--Krasovskii approach are presented in Section III. Section IV is devoted to an illustrative example. The paper ends with discussion in Section V, where we compare the results of both approaches.
The proof for the estimates obtained via the Razumikhin approach is given in the appendix.\\
\textbf{Notation: } The space of $\mathbb{R}^n$ valued continuous functions on $[-h,0],$ which is endowed with the norm $ \|\varphi\|_h=\max_{\theta\in[-h,0]}\|\varphi(\theta)\|,$ is denoted  by $C([-h,0],\mathbb{R}^n)$. Here, $\|\varphi(\theta)\|$ stands for the Euclidean norm. We use $|\varphi|_h$ instead of $\|\varphi\|_h,$ if $\varphi$ admits scalar values. The solution and the restriction of the solution to the segment $[t-h,t]$ are respectively denoted as $x(t)$ and $x_t$. If the initial condition is important, we write $x(t,\varphi)$ and $x_t(\varphi)$.

\section{Previous results}
In this paper, we study a homogeneous time delay system of the form
\begin{equation} \label{eq:delay_system}
\dot{x}(t)=f(x(t),x(t-h)), 
\end{equation}
where $x(t)\in \mathbb{R}^n,$ $h>0$ is a constant delay, the vector function $f(x,y)$ is continuously differentiable and homogeneous of degree $\mu>1$, i.e. $$f(cx,cy)=c^\mu f(x,y)\quad \forall c>0,\quad \forall x,y\in \mathbb{R}^n.$$ The initial condition is
\begin{equation*} 
x(\theta)=\varphi(\theta),\quad\varphi\in C([-h,0],\mathbb{R}^n). 
\end{equation*}
Consider also the delay free system
\begin{equation} \label{eq:delay-free_system}
    \dot{z}(t)=f(z(t),z(t)). 
\end{equation} 
In the sequel, system \eqref{eq:delay-free_system} is assumed to be asymptotically stable.\\

\subsection{Properties of homogeneous systems}
A homogeneous function $f(x,y)$, $x,y\in\mathbb{R}^n$, admits a bound of the form  \citep{zubov1964methods}
\begin{gather}\label{eq:bound_function}
    \|f(x,y)\|\leq m\left(\|x\|^\mu+\|y\|^\mu\right),\quad\text{where}\\
    m\geq\max_{\|x\|^\mu+\|y\|^\mu=1}\|f(x,y)\|>0.\notag
\end{gather}
Its derivative consists of the homogeneous functions of degree $\mu-1,$ hence there exist constants \mbox{$m_1,m_2>0$} such that
\begin{gather}\label{eq:bound_deriva_x_function}
    \left\|\frac{\partial f(x,y)}{\partial x}\right\|\leq m_1\left(\|x\|^{\mu-1}+\|y\|^{\mu-1}\right),
\\
\label{eq:bound_deriva_y_function}
    \left\|\frac{\partial f(x,y)}{\partial y}\right\|\leq m_2\left(\|x\|^{\mu-1}+\|y\|^{\mu-1}\right). 
\end{gather}
Since the delay free system \eqref{eq:delay-free_system} is asymptotically stable, 
it is possible to find a twice continuously differentiable positive definite Lyapunov function $V(x)$ \citep{zubov1964methods,rosier1992homogeneous} which is homogeneous of degree $\gamma\geq 2,$ and thus admits bounds of the form
\begin{equation}\label{eq:bound_V}
    k_0\|x\|^\gamma\leq V(x)\leq k_1\|x\|^\gamma,
\end{equation}
where $k_0,k_1>0.$ Furthermore, the time derivative of $V(x)$ along the solutions of system \eqref{eq:delay-free_system} satisfies the equation 
\begin{equation}\label{eq:bound_dot_V}
    \left(\frac{\partial V(x)}{\partial x}\right)^T f(x,x)\leq -\mathrm{w}\|x\|^{\gamma+\mu-1},\quad \mathrm{w}>0.
\end{equation}
Since the components of $\dfrac{\partial V(x)}{\partial x}$ and $\dfrac{\partial^2 V(x)}{\partial x^2}$ are the homogeneous functions of degrees $\gamma-1$ and $\gamma-2\geq 0,$ respectively, there exist constants $k_2,k_3>0$ such that
\begin{equation}  \label{eq:bound_deriv_x_V}
\left\|\frac{\partial V(x)}{\partial x}\right\|\leq k_2\|x\|^{\gamma-1},\ \ \left\|\frac{\partial^2 V(x)}{\partial x^2}\right\|\leq k_3\|x\|^{\gamma-2}.
\end{equation}
In the considerations below, an arbitrary Lyapunov function $V(x)$ which possesses all the above mentioned properties in a neighborhood $\|x\|\leq H,$ $H>0,$ may be used.
\begin{theorem} \textup{\citep{aleksandrov2014delay}}.
Let $\mu>1.$ If system \eqref{eq:delay-free_system} is asymptotically stable, then the trivial solution of the delay system \eqref{eq:delay_system} is also asymptotically stable for any delay $h\geq 0.$
\end{theorem}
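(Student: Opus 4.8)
The plan is to apply the Lyapunov--Razumikhin theorem, taking as Razumikhin function the Lyapunov function $V(x)$ of the delay-free system \eqref{eq:delay-free_system}, whose existence and the bounds \eqref{eq:bound_V}--\eqref{eq:bound_deriv_x_V} are guaranteed by its asymptotic stability. The heart of the argument is that, because the homogeneity degree satisfies $\mu>1$, the discrepancy introduced by the delay enters the derivative of $V$ along \eqref{eq:delay_system} only at a strictly higher order than the negative definite decay inherited from the delay-free system, and is therefore dominated in a neighbourhood of the origin.

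Concretely, I would first differentiate $V(x(t))$ along \eqref{eq:delay_system} and split the right-hand side by adding and subtracting $f(x(t),x(t))$:
$$\dot V(x(t)) = \Big(\tfrac{\partial V}{\partial x}\Big)^{\!T} f(x(t),x(t)) + \Big(\tfrac{\partial V}{\partial x}\Big)^{\!T}\big[f(x(t),x(t-h))-f(x(t),x(t))\big].$$
The first summand is the delay-free decay rate, bounded above by $-\mathrm{w}\|x(t)\|^{\gamma+\mu-1}$ through \eqref{eq:bound_dot_V}. For the second summand (the delay perturbation) I would apply the mean value theorem in the second argument together with the gradient bound \eqref{eq:bound_deriva_y_function} and $\|\partial V/\partial x\|\le k_2\|x(t)\|^{\gamma-1}$ from \eqref{eq:bound_deriv_x_V}. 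The remaining factor $\|x(t-h)-x(t)\|$ I would rewrite as $\big\|\int_{t-h}^{t} f(x(s),x(s-h))\,ds\big\|$ and estimate using \eqref{eq:bound_function}.

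Next I would invoke the Razumikhin condition. Assuming $V(x(t+\theta))\le q\,V(x(t))$ over the relevant segment for some fixed $q>1$, the two-sided bound \eqref{eq:bound_V} yields $\|x(t+\theta)\|\le (qk_1/k_0)^{1/\gamma}\|x(t)\|$, so every history value appearing above is controlled by a constant multiple of $\|x(t)\|$. Tracking the homogeneity degrees then shows that the perturbation term is of order $\|x(t)\|^{\gamma+2\mu-2}$, whereas the decay term is of order $\|x(t)\|^{\gamma+\mu-1}$. Since $(\gamma+2\mu-2)-(\gamma+\mu-1)=\mu-1>0$, one obtains an estimate of the form $\dot V(x(t))\le -\|x(t)\|^{\gamma+\mu-1}\big(\mathrm{w}-K\|x(t)\|^{\mu-1}\big)$, which is strictly negative whenever $\|x(t)\|$ is below an explicit threshold. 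By the Razumikhin theorem this establishes local asymptotic stability for every $h\ge0$.

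The main obstacle is exactly this higher-order domination: it is the hypothesis $\mu>1$ that makes the delay-induced term $\|x(t)\|^{\gamma+2\mu-2}$ higher order than $\|x(t)\|^{\gamma+\mu-1}$; for $\mu=1$ the two orders coincide and the argument collapses, which is why delay-independent stability cannot be claimed in that borderline case. A secondary technical point is that bounding $\|x(t-h)-x(t)\|$ via the integral of $f(x(s),x(s-h))$ over $[t-h,t]$ brings in trajectory values back to $t-2h$, so the Razumikhin estimate must be applied over the doubled segment $[t-2h,t]$; this is legitimate in the local analysis once the neighbourhood is chosen so that the trajectory remains in the working region. Collecting the constants explicitly at each of these steps is what ultimately yields the quantitative estimates revisited in the appendix.
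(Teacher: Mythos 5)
Your proposal is correct and follows essentially the same route as the paper's own (appendix) proof: you use the delay-free Lyapunov function $V$ as the Razumikhin function, split $\dot V$ by adding and subtracting $f(x(t),x(t))$, express the delay perturbation via the mean value theorem and the integral representation of $x(t)-x(t-h)$ through $\dot x=f$ (correctly noting this forces the Razumikhin condition over the doubled segment $[t-2h,t]$), and conclude by domination of the order-$\|x(t)\|^{\gamma+2\mu-2}$ perturbation by the decay term $-\mathrm{w}\|x(t)\|^{\gamma+\mu-1}$, which is exactly where $\mu>1$ enters. The only cosmetic difference is that the paper merges your two estimation steps into a single parametrized integral, $h\int_0^1\frac{\partial f(x(t),x(t-\theta h))}{\partial x(t-\theta h)}\,f\bigl(x(t-\theta h),x(t-\theta h-h)\bigr)\,d\theta$, rather than bounding $\|x(t-h)-x(t)\|$ separately.
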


\subsection{ A review on existing estimates via Lyapunov--Razumikhin approach}
\label{Sect_R}
Here, we recall the estimates of the domain of attraction and of the system response obtained  via the Lyapunov--Razumikhin  approach in \citet{aleksandrov2012asymptotic,aleksandrov2014delay,aleksandrov2016asymptotic}, where the Lyapunov function of the delay free system \eqref{eq:delay-free_system} is used. For the sake of completeness and for comparison purposes, we  include the detailed proofs into the appendix, where we provide explicit formulae for all involved constants.

Given $\alpha>1,$ introduce the Razumikhin condition
\begin{equation}\label{razu_condition}
    V(x(\xi))<\alpha V(x(t)),\quad \forall\,\xi\in [t-2h,t], \quad\forall\, t\geq h.
\end{equation}
It is shown in \citet{aleksandrov2014delay} that there exist $\delta>0$ and $k_5=k_5(\delta)>0$ such that the time derivative of $V(x)$ satisfies
\begin{equation}
\label{eq:der_Razum}
    \frac{dV(x(t))}{dt}\leq  -k_5\|x(t)\|^{\gamma+\mu-1}
\end{equation}
along the solutions of system \eqref{eq:delay_system} which obey the Razumikhin condition \eqref{razu_condition} and $\|x_t\|_h\leq\delta.$ Define the values
\begin{gather*}
\kappa=\left(\frac{k_0}{k_1}\right)^\frac{1}{\gamma},\quad
K=\Bigl(1+(\mu-1)mh(\kappa \delta)^{\mu-1}\Bigr)^\frac{1}{\mu-1}.
\end{gather*}
\begin{theorem}\textup{\citep{aleksandrov2014delay}}. \label{th:attraction_LR}
Let $\Delta$ be the root of the equation 
\begin{equation}\label{eq:attraction_LR}
    \Delta+mh\Delta^\mu=\dfrac{\kappa \delta}{K}.
\end{equation}
If system \eqref{eq:delay-free_system} is asymptotically stable, then the set of initial functions satisfying $\|\varphi\|_h<\Delta$ 
is contained in the attraction region of the trivial solution of system~\eqref{eq:delay_system}.
\end{theorem}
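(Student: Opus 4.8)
The plan is to show that every initial function with $\|\varphi\|_h<\Delta$ produces a solution that never leaves the region where the Razumikhin derivative estimate \eqref{eq:der_Razum} is valid, and that therein $V(x(t))\to 0$. The natural sublevel set to work with is $\{x:V(x)\le V_\delta\}$ with $V_\delta:=k_0\delta^\gamma$; by \eqref{eq:bound_V} and the identity $\kappa^\gamma=k_0/k_1$ one has $\|x\|\le\kappa\delta\Rightarrow V(x)\le V_\delta$ and $V(x)\le V_\delta\Rightarrow\|x\|\le\delta$, so this set satisfies $\{\|x\|\le\kappa\delta\}\subseteq\{V\le V_\delta\}\subseteq\{\|x\|\le\delta\}$ and is exactly the set on which $\|x_t\|_h\le\delta$ can be read off from the values of $V$. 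Note also $\Delta\le\Delta+mh\Delta^\mu=\kappa\delta/K\le\kappa\delta$, since $K\ge1$.

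First I would bound the solution on the initial interval $[0,h]$, where $x(t-h)=\varphi(t-h)$ and hence $\|x(t-h)\|\le\Delta$. Using the growth bound \eqref{eq:bound_function} on $f$, the solution satisfies the integral inequality $\|x(t)\|\le\Delta+mh\Delta^\mu+m\int_0^t\|x(s)\|^\mu\,ds$, whose right-hand side starts precisely at $\Delta+mh\Delta^\mu=\kappa\delta/K$ by the defining equation \eqref{eq:attraction_LR} of $\Delta$. Comparing with the scalar separable equation $\dot{\bar u}=m\bar u^\mu$, $\bar u(0)=\kappa\delta/K$, whose solution $\bar u(t)^{1-\mu}=\bar u(0)^{1-\mu}-(\mu-1)mt$ is explicit, the delicate point is that $K$ is tuned so that $\bar u(h)=\kappa\delta$ exactly: the relation $K^{\mu-1}=1+(\mu-1)mh(\kappa\delta)^{\mu-1}$ is precisely what gives $\bar u(0)^{1-\mu}-(\mu-1)mh=(\kappa\delta)^{1-\mu}$, and one must additionally check non-blow-up on $[0,h]$, i.e. $(\mu-1)mh\,\bar u(0)^{\mu-1}<1$. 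This yields $\|x(t)\|\le\bar u(t)\le\kappa\delta$, hence $V(x(t))\le V_\delta$, for all $t\in[0,h]$; on $[-h,0]$ the same holds because $\|\varphi\|_h<\Delta\le\kappa\delta$.

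Next I would establish invariance of $\{V\le V_\delta\}$ for $t\ge h$ by contradiction. If $t^\ast:=\inf\{t:V(x(t))>V_\delta\}$ were finite, continuity and the previous step force $V(x(t^\ast))=V_\delta$ and $t^\ast\ge h$, so that $[t^\ast-2h,t^\ast]\subseteq[-h,t^\ast]$ and $V(x(\xi))\le V_\delta=V(x(t^\ast))<\alpha V(x(t^\ast))$ for all $\xi\in[t^\ast-2h,t^\ast]$; thus the Razumikhin condition \eqref{razu_condition} holds and $\|x_{t^\ast}\|_h\le\delta$. Estimate \eqref{eq:der_Razum} then gives $\tfrac{d}{dt}V(x(t^\ast))\le-k_5\|x(t^\ast)\|^{\gamma+\mu-1}<0$, contradicting the fact that $V(x(\cdot))$ reaches $V_\delta$ from below at $t^\ast$, where its upper right derivative must be nonnegative. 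Consequently $V(x(t))\le V_\delta$, i.e. $\|x(t)\|\le\delta$, for all $t\ge0$, and the solution never leaves the domain of validity of \eqref{eq:der_Razum}.

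Finally, convergence follows by the standard Razumikhin argument: since $\|x_t\|_h\le\delta$ holds for all $t$, the decrease estimate \eqref{eq:der_Razum} is available whenever \eqref{razu_condition} is satisfied, which yields $V(x(t))\to0$ and hence $x(t)\to0$. Thus every $\varphi$ with $\|\varphi\|_h<\Delta$ lies in the attraction region. The main obstacle I anticipate is the sharp calibration in the second step: obtaining the clean terminal value $\bar u(h)=\kappa\delta$ from the nonlinear comparison, together with the non-blow-up check on $[0,h]$, is exactly what forces the particular form of $K$ and of the defining equation \eqref{eq:attraction_LR}; the invariance and convergence steps are then routine.
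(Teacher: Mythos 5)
Your proof is correct and follows essentially the same route as the paper: your bound on $[0,h]$ via the integral inequality and the comparison equation $\dot{\bar u}=m\bar u^{\mu}$ reproduces the paper's Lemma~\ref{lemma_Raz} (including the exact calibration of $K$ and the non-blow-up check), and your invariance-by-contradiction at $t^{\ast}$ using the Razumikhin condition together with estimate \eqref{eq:der_Razum} is precisely the paper's argument with the first violation time $\bar t$, followed by the same appeal to the standard Razumikhin machinery for attractivity. The only cosmetic difference is that the paper's appendix additionally re-derives \eqref{eq:der_Razum} with explicit constants $k_4$, $H$, $k_5$, which you legitimately take as given since it is stated before the theorem.
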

\begin{remark}
It follows from the proof of Theorem~\ref{th:attraction_LR} that if $\|\varphi\|_h<\Delta,$ then \mbox{$\|x(t,\varphi)\|<\delta$} for any $t\geq0.$
\end{remark}
\begin{theorem}\textup{\citep{aleksandrov2012asymptotic,aleksandrov2016asymptotic}}. \label{th:estimation_LR}
If system \eqref{eq:delay-free_system} is asymptotically stable, then there exist $\tilde{c}_1,$ $\tilde{c}_2>0$ such that the solutions of system~\eqref{eq:delay_system} with $\|\varphi\|_h < \Delta,$ where $\Delta$ is the root of equation \eqref{eq:attraction_LR}, admit an estimate of the form 
\begin{equation}
\label{eq:final_est_LR}
    \|x(t,\varphi)\|\leq \frac{\tilde{c}_1\|\varphi\|_h}{(1+\tilde{c}_2\|\varphi\|_h^{\mu-1}t)^\frac{1}{\mu-1}},\quad t\geq 0.
\end{equation}
\end{theorem}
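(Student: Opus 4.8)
The plan is to combine the pointwise decay of the Lyapunov function established in \eqref{eq:der_Razum} with the polynomial sandwich bounds \eqref{eq:bound_V} to integrate a scalar differential inequality. First I would invoke Theorem~\ref{th:attraction_LR} together with the following Remark to guarantee that, for any initial function with $\|\varphi\|_h<\Delta$, the solution stays in the region $\|x(t,\varphi)\|<\delta$ for all $t\geq 0$; this is what licenses the use of the derivative estimate \eqref{eq:der_Razum}, since the latter was only shown to hold under $\|x_t\|_h\leq\delta$ and the Razumikhin condition. The genuinely nontrivial point is verifying that the Razumikhin condition \eqref{razu_condition} is in force along the trajectory, so that \eqref{eq:der_Razum} can be applied at (almost) every $t\geq h$; I expect this to follow from the monotone decrease of $V(x(t))$ once the solution has entered the invariant region, together with a separate direct argument on the initial interval $[0,h]$ where the delayed argument still reaches into the initial data.

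Once \eqref{eq:der_Razum} is available, the key computational step is to close the estimate in terms of $V$ alone. Using the lower bound $k_0\|x\|^\gamma\leq V(x)$ from \eqref{eq:bound_V} I would write $\|x(t)\|^{\gamma+\mu-1}\geq \bigl(V(x(t))/k_1\bigr)^{(\gamma+\mu-1)/\gamma}$, so that \eqref{eq:der_Razum} yields a scalar inequality of the form
\begin{equation*}
\frac{dV}{dt}\leq -c\,V^{\,1+\frac{\mu-1}{\gamma}},
\end{equation*}
for an explicit constant $c>0$ built from $k_0,k_1,k_5$. This is a separable Bernoulli-type inequality: dividing by the right-hand power and integrating from $0$ to $t$ gives a bound of the form $V(x(t))\leq V(\varphi(0))\bigl(1+c'\,V(\varphi(0))^{(\mu-1)/\gamma}\,t\bigr)^{-\gamma/(\mu-1)}$, where the exponent $\gamma/(\mu-1)$ is exactly what is needed to match \eqref{eq:final_est_LR} after taking $\gamma$-th roots.

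Finally I would convert back from $V$ to $\|x\|$ using both sides of \eqref{eq:bound_V}. Applying $k_0\|x(t)\|^\gamma\leq V(x(t))$ on the left and $V(\varphi(0))\leq k_1\|\varphi(0)\|^\gamma\leq k_1\|\varphi\|_h^\gamma$ on the right, and then raising to the power $1/\gamma$, produces precisely the claimed form \eqref{eq:final_est_LR}, with $\tilde{c}_1$ absorbing the ratio $(k_1/k_0)^{1/\gamma}$ and $\tilde{c}_2$ absorbing $c'k_1^{(\mu-1)/\gamma}$; the homogeneity degrees conspire so that the powers of $\|\varphi\|_h$ come out as $\|\varphi\|_h$ in the numerator and $\|\varphi\|_h^{\mu-1}$ multiplying $t$ inside the bracket. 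The main obstacle, as noted, is rigorously justifying that the Razumikhin condition holds along the whole trajectory rather than merely assuming it; this requires care on $[0,h]$ and an invariance-type argument thereafter, and is where the constants $\delta$, $K$, and the precise definition of $\Delta$ in \eqref{eq:attraction_LR} enter, which is presumably why the authors relegate the full verification to the appendix.
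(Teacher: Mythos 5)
There is a genuine gap at the heart of your plan: you propose to integrate the scalar inequality $\frac{dV}{dt}\leq -c\,V^{1+\frac{\mu-1}{\gamma}}$ pointwise from $0$ to $t$, justifying this by claiming that the Razumikhin condition \eqref{razu_condition} holds along the whole trajectory as a consequence of ``the monotone decrease of $V(x(t))$''. This is circular and, in fact, not provable in the proposed form. The derivative estimate \eqref{eq:der_Razum} is valid \emph{only at those time instants} where the Razumikhin condition is satisfied; at other instants $V(x(t))$ may increase, so $V$ is not monotone along solutions. Worse, even exact monotone decay would not rescue the argument: since $\alpha>1$, condition \eqref{razu_condition} requires $V(x(t-2h))<\alpha V(x(t))$, i.e.\ it is an upper bound on how \emph{fast} $V$ may decrease over a window of length $2h$, and the full rate $\overline{d}=k_5k_1^{-(\gamma+\mu-1)/\gamma}$ produced by your direct integration can violate precisely this constraint. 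This is exactly why the estimate cannot be closed with the rate $\overline{d}$ itself, and why a naive Krasovskii-style integration of the inequality is inadmissible here.

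The missing ingredient, which is the core of the paper's proof, is a Razumikhin-type comparison argument with a \emph{deliberately slowed} decay rate. One fixes $\rho<\overline{d}$ subject to the additional conditions \eqref{firs_condition} and \eqref{second_condition}, considers the comparison equation $\dot z(t)=-\rho z^{(\gamma+\mu-1)/\gamma}(t)$ started at $t=h$ from $z_0=k_1K^\gamma\bigl(1+mh\Delta^{\mu-1}\bigr)^\gamma\|\varphi\|_h^\gamma$ (note that the initial value must majorize $V(x(h))$ via Lemma~\ref{lemma_Raz}, which accounts for possible growth on $[0,h]$; your step uses only $V(\varphi(0))\leq k_1\|\varphi\|_h^\gamma$, which is insufficient), and then \emph{proves that the comparison solution $z(t)$ itself satisfies the Razumikhin condition} --- this is what condition \eqref{firs_condition} guarantees. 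Only then does the first-touching-instant argument work: if $V(x(\cdot))$ ever reached $z(\cdot)$, the bound $V\leq z$ on the past window combined with the Razumikhin property of $z$ would force the Razumikhin condition for the solution at that instant, whence $\dot V\leq-\overline{d}\,V^{(\gamma+\mu-1)/\gamma}<\dot z$, a contradiction. Your outline also elides the stitching of the interval $[0,h]$, where the bound comes from Lemma~\ref{lemma_Raz} and where condition \eqref{second_condition} is needed to shift the time origin of the $t\geq h$ estimate back to $0$; this is secondary but not free. The final conversion between $V$ and $\|x\|$ in your last paragraph is fine, but it rests on the comparison bound $V(x(t))\leq z(t)$ that your argument does not establish.
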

The constants $k_5,$ $\delta,$ $\tilde{c}_1,$ $\tilde{c}_2$ are specified in the Appendix.

\section{Estimates via Lyapunov--Krasovskii approach}
\label{Sect_LK}
The purpose of this section is to present the estimates for solutions via the Lyapunov--Krasovskii approach. 
\subsection{General Case}
We use the functional introduced in \citet{Voronezh,alexandrova2019lyapunov}: 
\begin{gather}\label{eq:functional}
v(\varphi)=V(\varphi(0))+\left(\left.\frac{\partial V(x)}{\partial x}\right|_{x=\varphi(0)}\right)^T \int_{-h}^{0}f(\varphi(0),\varphi(\theta))d\theta \\+\int_{-h}^{0}(\mathrm{w_1}+(h+\theta)\mathrm{w_2})\|\varphi(\theta)\|^{\gamma+\mu-1}d\theta.\notag
\end{gather}
Here, $\mathrm{w}_1, \mathrm{w}_2>0$ are such that $\mathrm{w}_0=\mathrm{w}-\mathrm{w}_1-h\mathrm{w}_2>0$. It is well known that the Lyapunov functional~\eqref{eq:functional} is a measure of the states of the system, and that its lower and upper bounds, which we present below, are crucial to find the domain of attraction and the estimates of the solutions. 

\begin{lemma} \label{lemma_1}
\textup{\citep{alexandrova2019lyapunov}} There exist $a_1,a_2>0$ such that functional \eqref{eq:functional} admits a lower bound of the form
\begin{equation}\label{eq:firs_lower_bound_v}
    v(\varphi)\geq a_1\|\varphi(0)\|^\gamma+a_2\int_{-h}^{0}\|\varphi(\theta)\|^{\gamma+\mu-1} d\theta
\end{equation}
in the neighbourhood $\|\varphi\|_h\leq \delta.$ Here, $\delta\in(0,H_1),$
\begin{gather}
\begin{split}
\label{eq:H1}
    H_1&=\left(\frac{k_0}{hk_2m(1+\chi^{-2\mu})}\right)^{\frac{1}{\mu-1}}, \\
a_1&=k_0-hk_2m(1+\chi^{-2\mu})\delta^{\mu-1},\notag\\ a_2&=\mathrm{w}_1-k_2m\chi^{2(\gamma-1)}.\notag
\end{split}
\end{gather}
The constant $\chi>0$ is chosen in such a way that $a_2>0.$
\end{lemma}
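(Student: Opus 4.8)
The plan is to bound each of the three summands in the functional \eqref{eq:functional} from below and then to combine them. The first and third terms are immediate: by \eqref{eq:bound_V} we have $V(\varphi(0))\geq k_0\|\varphi(0)\|^\gamma$, while the weight $\mathrm{w}_1+(h+\theta)\mathrm{w}_2$ is bounded below by $\mathrm{w}_1$ on $[-h,0]$ since $\mathrm{w}_2>0$, so the last integral is at least $\mathrm{w}_1\int_{-h}^{0}\|\varphi(\theta)\|^{\gamma+\mu-1}d\theta$. The whole difficulty lies in the indefinite cross term, which may be negative; the entire argument is devoted to showing that it cannot overwhelm the two positive contributions.

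First I would estimate the cross term from below by minus its absolute value and apply the Cauchy--Schwarz inequality together with the homogeneous bounds. Using \eqref{eq:bound_deriv_x_V} for $\|\partial V/\partial x\|$ and \eqref{eq:bound_function} for $\|f(\varphi(0),\varphi(\theta))\|$, this produces a lower bound of the form
$$-k_2 m\,\|\varphi(0)\|^{\gamma-1}\left(h\|\varphi(0)\|^\mu+\int_{-h}^{0}\|\varphi(\theta)\|^\mu\,d\theta\right).$$
The term $h\|\varphi(0)\|^{\gamma+\mu-1}$ already has the correct homogeneity degree, but the integral mixes $\|\varphi(0)\|^{\gamma-1}$ and $\|\varphi(\theta)\|^\mu$, whose degrees sum to $\gamma+\mu-1$.

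The key step is to split this mixed product by the weighted Young inequality
$$\|\varphi(0)\|^{\gamma-1}\|\varphi(\theta)\|^\mu\leq \chi^{-2\mu}\|\varphi(0)\|^{\gamma+\mu-1}+\chi^{2(\gamma-1)}\|\varphi(\theta)\|^{\gamma+\mu-1},$$
valid for every $\chi>0$ with the conjugate exponents $\tfrac{\gamma+\mu-1}{\gamma-1}$ and $\tfrac{\gamma+\mu-1}{\mu}$; the free parameter $\chi$ governs how the cross term is distributed between the point value $\varphi(0)$ and the distributed state $\varphi(\theta)$. Integrating over $[-h,0]$ and collecting terms, the coefficient of $\int_{-h}^{0}\|\varphi(\theta)\|^{\gamma+\mu-1}d\theta$ becomes $\mathrm{w}_1-k_2 m\chi^{2(\gamma-1)}=a_2$, while the total coefficient of $\|\varphi(0)\|^{\gamma+\mu-1}$ is $-k_2 m h(1+\chi^{-2\mu})$. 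Since $\gamma\geq 2$, one has $\chi^{2(\gamma-1)}\to 0$ as $\chi\to 0^+$, so $\chi$ can be fixed small enough that $a_2>0$.

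Finally, on the neighbourhood $\|\varphi\|_h\leq\delta$ one has $\|\varphi(0)\|\leq\delta$, hence $\|\varphi(0)\|^{\gamma+\mu-1}\leq\delta^{\mu-1}\|\varphi(0)\|^\gamma$. Folding the remaining negative $\|\varphi(0)\|^{\gamma+\mu-1}$ contribution into $k_0\|\varphi(0)\|^\gamma$ yields the coefficient $a_1=k_0-hk_2 m(1+\chi^{-2\mu})\delta^{\mu-1}$, which is positive precisely when $\delta<H_1$ with $H_1$ as in \eqref{eq:H1}. I expect the main obstacle to be exactly this coupling: the parameter $\chi$ must first be chosen small to secure $a_2>0$, and only afterwards can $\delta$ be restricted, through the $\chi$-dependent threshold $H_1$, to secure $a_1>0$, so the order of the two choices is essential.
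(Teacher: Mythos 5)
Your proof is correct: the paper itself does not prove Lemma~\ref{lemma_1} (it is cited from \citet{alexandrova2019lyapunov}), but your argument recovers exactly the stated constants $a_1$, $a_2$ and the threshold $H_1$, and the order of choices you emphasize ($\chi$ first to secure $a_2>0$, then $\delta<H_1(\chi)$ to secure $a_1>0$) is precisely what the lemma's formulation requires. Your route also mirrors the technique the paper uses for the companion upper bound in Lemma~\ref{lemma_2} --- termwise estimation via \eqref{eq:bound_V}, \eqref{eq:bound_deriv_x_V}, \eqref{eq:bound_function}, followed by a Young-type splitting of the mixed term --- with the only refinement being the weighted version carrying the free parameter $\chi$, which is exactly what the lower bound needs.
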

\begin{lemma}
\label{lemma_2}
There exist $b_1,b_2>0$ such that functional \eqref{eq:functional} admits an upper bound of the form
\begin{equation} \label{eq:second_upper_bound_v}
    v(\varphi)\leq b\left(\|\varphi(0)\|^{\gamma}+\int_{-h}^{0}\|\varphi(\theta)\|^{\gamma}d\theta\right),\quad\text{if}\quad \|\varphi\|_h\leq \delta.
\end{equation}
Here, $b=\max\{b_1,b_2\},\,$
$b_1=k_1+2hmk_2\delta^{\mu-1},$ $\, b_2=(mk_2+\mathrm{w}_1+h\mathrm{w}_2)\delta^{\mu-1}.$
\end{lemma}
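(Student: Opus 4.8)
The plan is to bound each of the three summands in \eqref{eq:functional} separately and then to collect the coefficients of $\|\varphi(0)\|^{\gamma}$ and of $\int_{-h}^{0}\|\varphi(\theta)\|^{\gamma}d\theta$ into $b_1$ and $b_2$, respectively. The first summand is handled at once by the upper bound in \eqref{eq:bound_V}, giving $V(\varphi(0))\leq k_1\|\varphi(0)\|^{\gamma}$, which accounts for the term $k_1$ in $b_1$. For the last summand, since $h+\theta\leq h$ on $[-h,0]$ the weight obeys $\mathrm{w}_1+(h+\theta)\mathrm{w}_2\leq\mathrm{w}_1+h\mathrm{w}_2$; combining this with $\|\varphi(\theta)\|^{\gamma+\mu-1}=\|\varphi(\theta)\|^{\mu-1}\|\varphi(\theta)\|^{\gamma}\leq\delta^{\mu-1}\|\varphi(\theta)\|^{\gamma}$ on $\|\varphi\|_h\leq\delta$ produces the contribution $(\mathrm{w}_1+h\mathrm{w}_2)\delta^{\mu-1}$ to $b_2$.

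The cross term is the delicate one. First I would apply the Cauchy--Schwarz inequality and move the norm inside the integral to obtain
$$\left|\left(\left.\tfrac{\partial V}{\partial x}\right|_{\varphi(0)}\right)^{T}\int_{-h}^{0}f(\varphi(0),\varphi(\theta))\,d\theta\right|\leq\left\|\left.\tfrac{\partial V}{\partial x}\right|_{\varphi(0)}\right\|\int_{-h}^{0}\|f(\varphi(0),\varphi(\theta))\|\,d\theta.$$
Using the gradient bound \eqref{eq:bound_deriv_x_V} and the function bound \eqref{eq:bound_function}, the right-hand side is at most $k_2\|\varphi(0)\|^{\gamma-1}\int_{-h}^{0}m\bigl(\|\varphi(0)\|^{\mu}+\|\varphi(\theta)\|^{\mu}\bigr)\,d\theta$, which splits into $k_2 m h\|\varphi(0)\|^{\gamma+\mu-1}$ and $k_2 m\|\varphi(0)\|^{\gamma-1}\int_{-h}^{0}\|\varphi(\theta)\|^{\mu}\,d\theta$. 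The first piece reduces to $k_2 m h\delta^{\mu-1}\|\varphi(0)\|^{\gamma}$ exactly as above, supplying one copy of $hmk_2\delta^{\mu-1}$ to $b_1$.

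The remaining piece requires redistributing the mixed product between the boundary value and the integral, and I expect this to be the main obstacle, since it is where the factor $\delta^{\mu-1}$ must be extracted while keeping the constants clean. Writing $\|\varphi(\theta)\|^{\mu}=\|\varphi(\theta)\|^{\mu-1}\|\varphi(\theta)\|\leq\delta^{\mu-1}\|\varphi(\theta)\|$ and then invoking the elementary inequality $\|\varphi(0)\|^{\gamma-1}\|\varphi(\theta)\|\leq\|\varphi(0)\|^{\gamma}+\|\varphi(\theta)\|^{\gamma}$, which follows from a short case distinction according to whether $\|\varphi(\theta)\|$ exceeds $\|\varphi(0)\|$ (equivalently, from Young's inequality with relaxed coefficients and $\gamma\geq2$), yields
$$k_2 m\|\varphi(0)\|^{\gamma-1}\int_{-h}^{0}\|\varphi(\theta)\|^{\mu}\,d\theta\leq k_2 m\delta^{\mu-1}\Bigl(h\|\varphi(0)\|^{\gamma}+\int_{-h}^{0}\|\varphi(\theta)\|^{\gamma}\,d\theta\Bigr).$$

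This supplies the second copy of $hmk_2\delta^{\mu-1}$ to $b_1$ and the term $mk_2\delta^{\mu-1}$ to $b_2$. Adding all contributions gives $b_1=k_1+2hmk_2\delta^{\mu-1}$ as the total coefficient of $\|\varphi(0)\|^{\gamma}$ and $b_2=(mk_2+\mathrm{w}_1+h\mathrm{w}_2)\delta^{\mu-1}$ as that of the integral. Taking $b=\max\{b_1,b_2\}$ then absorbs both coefficients into the single constant of \eqref{eq:second_upper_bound_v} and completes the estimate.
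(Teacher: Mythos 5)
Your proof is correct and follows essentially the same route as the paper's: both bound the three summands separately, handle the cross term via the gradient bound \eqref{eq:bound_deriv_x_V} and the homogeneity bound \eqref{eq:bound_function}, and redistribute the mixed product with the Young-type inequality $d^p g^q \leq d^{p+q}+g^{p+q}$, arriving at the same constants $b_1$ and $b_2$. The only difference is cosmetic: the paper applies that inequality with exponents $(\gamma-1,\mu)$ and extracts $\delta^{\mu-1}$ at the very end, whereas you extract $\delta^{\mu-1}$ first and apply it with exponents $(\gamma-1,1)$; the two orderings commute and yield identical bounds.
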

\begin{proof}
We estimate each summand of the functional. First,
\begin{gather*} 
    I_1(\varphi) = V(\varphi(0))\leq k_1\|\varphi(0)\|^\gamma,\\
    I_3(\varphi)=\int_{-h}^{0}(\mathrm{w_1}+(h+\theta)\mathrm{w_2})\|\varphi(\theta)\|^{\gamma+\mu-1}d\theta\leq (\mathrm{w}_1+h\mathrm{w}_2)\int_{-h}^{0}\|\varphi(\theta)\|^{\gamma+\mu-1}d\theta.
\end{gather*}
Second, it follows from \eqref{eq:bound_deriv_x_V} and \eqref{eq:bound_function} that 
\begin{align*}
I_2(\varphi) &=  \left(\left.\frac{\partial V(x)}{\partial x}\right|_{x=\varphi(0)}\right)^T \int_{-h}^{0}f(\varphi(0),\varphi(\theta))d\theta \\ &\leq hk_2m\|\varphi(0)\|^{\gamma+\mu-1}+k_2m\|\varphi(0)\|^{\gamma-1}\!\!\!\int_{-h}^{0}\!\!\|\varphi(\theta)\|^{\mu}d\theta. 
\end{align*}
Using the inequality $d^p g^q\leq d^{p+q}+g^{p+q},$ $\,p,q\geq1,$ $\,d,g\geq0$, we have
\begin{equation*} 
I_2(\varphi)\leq 2hk_2m\|\varphi(0)\|^{\gamma+\mu-1}+k_2m\int_{-h}^{0}\|\varphi(\theta)\|^{\gamma+\mu-1}d\theta.
\end{equation*}
Combining all summands together, we obtain
\begin{equation*}
v(\varphi)\leq (k_1+2hmk_2\|\varphi(0)\|^{\mu-1})\|\varphi(0)\|^{\gamma}
+(mk_2+\mathrm{w}_1+h\mathrm{w}_2)\int_{-h}^{0}\|\varphi(\theta)\|^{\gamma+\mu-1}d\theta.
\end{equation*}
Taking into account that $\|\varphi\|_h\leq \delta$, we arrive at \eqref{eq:second_upper_bound_v}.
\end{proof}
It follows from the proof of Lemma~\ref{lemma_2} that $v(\varphi)$ also admits an upper bound of the form
\begin{gather}\label{eq:third_upper_bound_v}
    v(\varphi)\leq k_1\|\varphi(0)\|^\gamma + \beta \|\varphi\|_h^{\gamma+\mu-1},
\end{gather}
where $\beta=(2k_2 m +\mathrm{w}_1+h\mathrm{w}_2)h.$\\
Now, we present the time derivative of the functional~\eqref{eq:functional} along the solutions of system~\eqref{eq:delay_system}. Notice that the positive constants $\mathrm{w}_0, \mathrm{w}_1, \mathrm{w}_2$  play an essential role in achieving the above lower bound, and also in establishing below the negativity of the derivative, in line with complete type functionals for linear case \citep{kharitonov2013time}.
\begin{lemma} \textup{\citep{alexandrova2019lyapunov}}
\label{lemma_3}
The time derivative of functional $v(\varphi)$ along the solutions of system~\eqref{eq:delay_system} satisfies
\begin{equation}\label{eq:sec_upper_bound_dot_v}
\frac{dv(x_t)}{dt}\leq -c\left(\|x(t)\|^{\gamma+\mu-1}+\int_{-h}^{0}\|x(t+\theta)\|^{\gamma+\mu-1}d\theta\right),
\end{equation}
if $\|x_t\|_h\leq\delta.$ Here, $c=\min\{c_1,c_2\},\,$ $c_1=\mathrm{w}_0-4hL\delta^{\mu-1},$
$\,c_2=\mathrm{w}_2-2L\delta^{\mu-1},$ \mbox{$\,L=mm_1k_2+m^2k_3,$} and $\delta\in(0,H_2),$ where 
\begin{equation*} 
H_2 = \left(\min\left\{\frac{\mathrm{w}_0}{4hL},\frac{\mathrm{w}_1}{2hL},\frac{\mathrm{w}_2}{2L}\right\}\right)^\frac{1}{\mu-1}.
\end{equation*}
\end{lemma}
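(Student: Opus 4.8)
The plan is to differentiate $v(x_t)$ along the solutions of \eqref{eq:delay_system} term by term, exploiting a cancellation that reproduces the delay-free decay \eqref{eq:bound_dot_V}, and then to absorb the remaining cross terms using the homogeneity bounds together with the smallness condition $\|x_t\|_h\leq\delta$. Write $v(x_t)=I_1+I_2+I_3$ as in the proof of Lemma~\ref{lemma_2}. After the change of variable $s=t+\theta$, the last two summands become $I_2=g^T G$ with $g=\partial V/\partial x\big|_{x(t)}$ and $G(t)=\int_{t-h}^t f(x(t),x(s))\,ds$, and $I_3=\int_{t-h}^t(\mathrm{w}_1+(h+s-t)\mathrm{w}_2)\|x(s)\|^{\gamma+\mu-1}\,ds$; in both, the dependence on $t$ enters through the limits and through the integrand.

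First I would compute $\dot I_1=g^T f(x(t),x(t-h))$ and apply the Leibniz rule to $I_2$. Differentiating $G$ yields three contributions: the boundary terms $f(x(t),x(t))-f(x(t),x(t-h))$ and the interior term $\bigl(\int_{t-h}^t\frac{\partial f(x(t),x(s))}{\partial x}\,ds\bigr)f(x(t),x(t-h))$; differentiating $g$ gives $\dot g=\frac{\partial^2 V}{\partial x^2}\big|_{x(t)}f(x(t),x(t-h))$. The decisive step is that the boundary contribution $-g^T f(x(t),x(t-h))$ cancels $\dot I_1$ exactly, leaving the delay-free decay term $g^T f(x(t),x(t))\leq-\mathrm{w}\|x(t)\|^{\gamma+\mu-1}$ by \eqref{eq:bound_dot_V}. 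Differentiating $I_3$ (limits plus the explicit $t$ in the weight) produces $(\mathrm{w}_1+h\mathrm{w}_2)\|x(t)\|^{\gamma+\mu-1}-\mathrm{w}_1\|x(t-h)\|^{\gamma+\mu-1}-\mathrm{w}_2\int_{t-h}^t\|x(s)\|^{\gamma+\mu-1}ds$. Adding the leading pieces gives the target structure with coefficients $-(\mathrm{w}-\mathrm{w}_1-h\mathrm{w}_2)=-\mathrm{w}_0$ on $\|x(t)\|^{\gamma+\mu-1}$, $-\mathrm{w}_1$ on $\|x(t-h)\|^{\gamma+\mu-1}$, and $-\mathrm{w}_2$ on $\int_{t-h}^t\|x(s)\|^{\gamma+\mu-1}ds$.

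It remains to bound the two cross terms $\dot g^T G$ and $g^T\bigl(\int_{t-h}^t\frac{\partial f(x(t),x(s))}{\partial x}\,ds\bigr)f(x(t),x(t-h))$. Using \eqref{eq:bound_function}, \eqref{eq:bound_deriva_x_function}, \eqref{eq:bound_deriv_x_V} and $\|x_t\|_h\leq\delta$, each factor is replaced by the corresponding homogeneous power, so every cross term is a sum of products of $\|x(t)\|$, $\|x(t-h)\|$ and $\|x(s)\|$ of total degree $\gamma+2\mu-2=(\gamma+\mu-1)+(\mu-1)$. I would peel off one factor of degree $\mu-1$, bound it by $\delta^{\mu-1}$, and reduce the remaining degree-$(\gamma+\mu-1)$ product to the canonical monomials $\|x(t)\|^{\gamma+\mu-1}$, $\|x(t-h)\|^{\gamma+\mu-1}$, $\|x(s)\|^{\gamma+\mu-1}$ via the inequality $d^pg^q\leq d^{p+q}+g^{p+q}$ already used in Lemma~\ref{lemma_2} (and its three-factor analogue). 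Collecting coefficients, the cross terms contribute at most $4hL\delta^{\mu-1}\|x(t)\|^{\gamma+\mu-1}$, $2hL\delta^{\mu-1}\|x(t-h)\|^{\gamma+\mu-1}$ and $2L\delta^{\mu-1}\int_{t-h}^t\|x(s)\|^{\gamma+\mu-1}ds$, with $L=mm_1k_2+m^2k_3$ (the genuinely smaller coefficient of the integral term being bounded by $L$ for uniformity).

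Finally I would combine the negative leading terms with these bounds. The $\|x(t)\|^{\gamma+\mu-1}$ coefficient becomes $-(\mathrm{w}_0-4hL\delta^{\mu-1})=-c_1$ and the integral coefficient becomes $-(\mathrm{w}_2-2L\delta^{\mu-1})=-c_2$, while the point value $\|x(t-h)\|^{\gamma+\mu-1}$ carries coefficient $-(\mathrm{w}_1-2hL\delta^{\mu-1})\leq0$ and is simply discarded; the estimate then follows with $c=\min\{c_1,c_2\}$. The condition $\delta<H_2$ is precisely what makes $c_1,c_2>0$ and $\mathrm{w}_1-2hL\delta^{\mu-1}\geq0$ hold simultaneously, matching the three entries of the minimum defining $H_2$. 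The main obstacle is the bookkeeping of the two cross terms: tracking which products land on $\|x(t)\|$, $\|x(t-h)\|$ or $\|x(s)\|$ and verifying that their coefficients accumulate exactly to the $4h$, $2h$, $2$ multiples of $L\delta^{\mu-1}$. A secondary subtlety is the correct Leibniz differentiation of $I_2$, since an error there would spoil the cancellation with $\dot I_1$ on which the whole argument rests.
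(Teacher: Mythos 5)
Your proof is correct. The paper itself does not reprove this lemma (it defers to \citet{alexandrova2019lyapunov}), but your reconstruction follows exactly the intended argument: the Leibniz-rule cancellation of $g^T f(x(t),x(t-h))$ against $\dot I_1$, the accounting $-\mathrm{w}+\mathrm{w}_1+h\mathrm{w}_2=-\mathrm{w}_0$ from the boundary terms of $I_3$, and the cross-term bookkeeping via $d^pg^q\leq d^{p+q}+g^{p+q}$ reproduce precisely the stated constants, including $L=mm_1k_2+m^2k_3$, the coefficients $4hL\delta^{\mu-1}$, $2hL\delta^{\mu-1}$, $2L\delta^{\mu-1}$, and, as a consistency check, all three entries of $H_2$ (the entry $\mathrm{w}_1/(2hL)$ being exactly what lets you discard the $\|x(t-h)\|^{\gamma+\mu-1}$ term with a non-positive coefficient).
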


Lemmas~\ref{lemma_1} and \ref{lemma_2} allow proving the following result on the estimate of the domain of attraction \citep{alexandrova2019lyapunov}.
\begin{theorem}\textup{\citep{alexandrova2019lyapunov}}
\label{thm:attr_region_LK}
Let $\Delta$ be a positive root of equation
\begin{equation}
\label{eq:attraction_LK}
k_1\Delta^\gamma + \beta \Delta^{\gamma + \mu -1} = a_1\delta^\gamma.
\end{equation}
If system \eqref{eq:delay-free_system} is asymptotically stable, then the set of initial functions $\|\varphi\|_h<\Delta$
is the estimate of the attraction region of the trivial solution of \eqref{eq:delay_system}.
\end{theorem}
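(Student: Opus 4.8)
The plan is to treat $v$ as a Lyapunov--Krasovskii functional in the spirit of the classical Krasovskii theorem, combining the lower bound of Lemma~\ref{lemma_1}, the upper bound \eqref{eq:third_upper_bound_v}, and the derivative estimate of Lemma~\ref{lemma_3}. All three are valid only in the region $\|x_t\|_h\le\delta$, with $\delta\in(0,\min\{H_1,H_2\})$ fixed so that $a_1,a_2>0$ and $c>0$. Hence the entire difficulty is to show that a solution launched from $\|\varphi\|_h<\Delta$ never leaves this region; once this is secured, the negativity of the derivative forces convergence to the origin. I would first record the elementary consequence $\Delta\le\delta$: since $a_1<k_0\le k_1$, the defining relation \eqref{eq:attraction_LK} gives $k_1\Delta^\gamma\le a_1\delta^\gamma\le k_1\delta^\gamma$, whence $\Delta\le\delta$.

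The core is a first-exit-time (barrier) argument. Fix $\varphi$ with $\|\varphi\|_h<\Delta$. Evaluating the upper bound \eqref{eq:third_upper_bound_v} at the initial function and using $\|\varphi(0)\|\le\|\varphi\|_h<\Delta$ yields
\begin{equation*}
v(\varphi)\le k_1\|\varphi(0)\|^\gamma+\beta\|\varphi\|_h^{\gamma+\mu-1}<k_1\Delta^\gamma+\beta\Delta^{\gamma+\mu-1}=a_1\delta^\gamma .
\end{equation*}
Suppose, for contradiction, that the solution eventually reaches the boundary, and set $t^*=\inf\{t>0:\|x(t)\|=\delta\}$. Since $\|x(\theta)\|=\|\varphi(\theta)\|<\Delta\le\delta$ on $[-h,0]$, continuity gives $\|x(t)\|<\delta$ for $t\in[-h,t^*)$ and $\|x(t^*)\|=\delta$; in particular $\|x_t\|_h\le\delta$ throughout $[0,t^*]$, so Lemma~\ref{lemma_3} applies and $v(x_t)$ is non-increasing there, whence $v(x_{t^*})\le v(\varphi)<a_1\delta^\gamma$. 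On the other hand, since $\|x_{t^*}\|_h\le\delta$, the lower bound \eqref{eq:firs_lower_bound_v} forces $v(x_{t^*})\ge a_1\|x(t^*)\|^\gamma=a_1\delta^\gamma$, a contradiction. Therefore no such $t^*$ exists: $\|x(t)\|<\delta$ for all $t\ge-h$, and hence $\|x_t\|_h\le\delta$ for all $t\ge0$.

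With the trajectory permanently confined to $\|x_t\|_h\le\delta$, Lemmas~\ref{lemma_1}--\ref{lemma_3} hold for every $t\ge0$: $v(x_t)$ is bounded below by a positive-definite expression, bounded above by \eqref{eq:second_upper_bound_v}, and has derivative \eqref{eq:sec_upper_bound_dot_v} that is negative definite. This is precisely the hypothesis set of the Lyapunov--Krasovskii theorem, so the trivial solution is asymptotically stable for such $\varphi$, which proves that $\{\varphi:\|\varphi\|_h<\Delta\}$ lies in the attraction region. Should an explicit convergence argument be preferred, integrating \eqref{eq:sec_upper_bound_dot_v} gives $\int_0^\infty\|x(t)\|^{\gamma+\mu-1}dt<\infty$, and since $\dot x$ is bounded on the invariant region by \eqref{eq:bound_function}, a Barbalat-type argument yields $\|x(t)\|\to0$. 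The main obstacle is the invariance step: the value of $\Delta$ is not free but is dictated by \eqref{eq:attraction_LK} exactly so that the initial bound on $v$ stays strictly below the level $a_1\delta^\gamma$ that the lower bound would impose at any boundary crossing; arranging the strict inequalities so that the sub-level set of $v$ cleanly traps the trajectory is the delicate part.
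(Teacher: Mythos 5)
Your proof is correct and takes essentially the approach the paper intends for this (cited) result: a first-exit-time barrier argument in which equation \eqref{eq:attraction_LK} is calibrated so that the upper bound \eqref{eq:third_upper_bound_v} gives $v(\varphi)<a_1\delta^\gamma$, the derivative bound of Lemma~\ref{lemma_3} keeps $v(x_t)$ below that level, and the lower bound \eqref{eq:firs_lower_bound_v} then forbids the trajectory from ever reaching $\|x(t)\|=\delta$. This is exactly the invariance property recorded in the remark following the theorem, and it mirrors the first-violation contradiction used in the appendix proof of Theorem~\ref{th:attraction_LR}; your concluding Barbalat step for attraction is a sound way to finish.
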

\begin{remark}
It follows from the proof of Theorem~\ref{thm:attr_region_LK} that if $\|\varphi\|_h<\Delta,$ then $\|x(t,\varphi)\|< \delta$ for any $t\geq 0$.
\end{remark}

Next, we connect the functional $v(x_t)$ to its time derivative with the help of the above bounds, using the following technical result.
\begin{lemma} \label{lemma_inequality}
Let $p$ and $q$ be such that $p>q\geq 1$. Then, the following inequality is satisfied
\begin{equation} \label{inqx}
    \left(\|x(t)\|^{q}+\int_{-h}^{0}\|x(t+\theta)\|^{q}d\theta\right)^\frac{p}{q}   
\leq L_1\left(\|x(t)\|^{p}+\int_{-h}^{0}\|x(t+\theta)\|^{p}d\theta\right),
\end{equation}
where $L_1=\bigl(2\max\{1,h\}\bigr)^{\frac{p}{q}-1}$.
\end{lemma}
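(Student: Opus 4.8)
The plan is to treat the quantity in parentheses as a sum of two nonnegative pieces, the point evaluation $\|x(t)\|^{q}$ and the integral $\int_{-h}^{0}\|x(t+\theta)\|^{q}d\theta$, and to bound the $\frac{p}{q}$-th power of this sum by controlling the two pieces separately. Write $r=\frac{p}{q}>1$, set $A=\|x(t)\|^{q}$ and $B=\int_{-h}^{0}\|x(t+\theta)\|^{q}d\theta$, so the left-hand side of \eqref{inqx} is $(A+B)^{r}$. The two ingredients are the convexity of $s\mapsto s^{r}$ and Hölder's inequality applied to the integral term.

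First I would invoke the elementary convexity estimate $(A+B)^{r}\leq 2^{r-1}(A^{r}+B^{r})$, which holds for all $A,B\geq 0$ and $r\geq 1$. The point term is then already in the desired form, since $A^{r}=\|x(t)\|^{q\cdot p/q}=\|x(t)\|^{p}$. The remaining task is to estimate $B^{r}=\bigl(\int_{-h}^{0}\|x(t+\theta)\|^{q}d\theta\bigr)^{p/q}$. Here I would apply Hölder's inequality to the product $\|x(t+\theta)\|^{q}\cdot 1$ with conjugate exponents $r=\frac{p}{q}$ and $r'=\frac{p}{p-q}$, obtaining $\int_{-h}^{0}\|x(t+\theta)\|^{q}d\theta\leq\bigl(\int_{-h}^{0}\|x(t+\theta)\|^{p}d\theta\bigr)^{q/p}h^{(p-q)/p}$. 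Raising this to the power $\frac{p}{q}$ gives $B^{r}\leq h^{\,p/q-1}\int_{-h}^{0}\|x(t+\theta)\|^{p}d\theta$.

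Combining the two estimates yields $(A+B)^{r}\leq 2^{r-1}\bigl(\|x(t)\|^{p}+h^{\,p/q-1}\int_{-h}^{0}\|x(t+\theta)\|^{p}d\theta\bigr)$, and since $\frac{p}{q}-1>0$ one has $\max\{1,h^{\,p/q-1}\}=(\max\{1,h\})^{p/q-1}$, so both coefficients are absorbed into the single constant $2^{p/q-1}(\max\{1,h\})^{p/q-1}=\bigl(2\max\{1,h\}\bigr)^{p/q-1}=L_{1}$, which is exactly the claimed bound. The only delicate point is the exponent bookkeeping in the Hölder step: one must verify that $\frac{p}{q}\cdot\frac{p-q}{p}=\frac{p}{q}-1$ so that the power of $h$ is precisely $\frac{p}{q}-1$, guaranteeing the stated constant rather than a larger one; everything else is routine. (As a cross-check, a single application of Jensen's inequality on the combined measure of total mass $1+h$ gives the sharper constant $(1+h)^{p/q-1}\leq L_{1}$, confirming the estimate.)
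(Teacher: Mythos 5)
Your proof is correct and follows essentially the same route as the paper's: the convexity bound $(d+g)^{l}\leq 2^{l-1}(d^{l}+g^{l})$ to split the two terms, followed by H\"older's inequality on the integral term with exponent bookkeeping $\frac{p}{q}\cdot\frac{p-q}{p}=\frac{p}{q}-1$, and absorption of both coefficients into $L_{1}=\bigl(2\max\{1,h\}\bigr)^{\frac{p}{q}-1}$. Your closing remark that a single Jensen argument on the combined measure of mass $1+h$ yields the sharper constant $(1+h)^{\frac{p}{q}-1}$ is a nice bonus beyond what the paper records, but the core argument is identical.
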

\begin{proof}
Given $d,g\geq 0$ and $l\geq 1$, the following inequality holds by convexity:
\begin{equation*}
    (d+g)^l\leq 2^{l-1}(d^l+g^l).
\end{equation*}
Applying this inequality to the left-hand side of \eqref{inqx},
we get  
\begin{equation} \label{inqx2}
\left(\|x(t)\|^{q}+\int_{-h}^{0}\|x(t+\theta)\|^{q}d\theta\right)^\frac{p}{q}
 \leq 2^{\frac{p}{q}-1}\left(\|x(t)\|^{p}+\left(\int_{-h}^{0}\|x(t+\theta)\|^{q}d\theta\right)^{\frac{p}{q}}\right).
\end{equation}
We use H\"older's inequality to end the proof:
\begin{equation*}
    \left(\int_{-h}^{0}\|x(t+\theta)\|^q d\theta\right)^{\frac{p}{q}}\leq h^{\frac{p-q}{q}}\int_{-h}^{0}\|x(t+\theta)\|^p d\theta.
\end{equation*}
Now, the result follows from \eqref{inqx2}.
\end{proof}

\begin{lemma} \label{lemma_connect_v_dot_v}
The following inequality is satisfied
\begin{equation}\label{eq:connect_v_dot_v} 
    \frac{dv(x_t)}{dt} \leq -L_2v(x_t)^{\frac{\gamma+\mu-1}{\gamma}},\quad t\geq0, 
\end{equation}
along the solutions with $\|x_t\|_h\leq \delta<\min\{H_1,H_2\}.$ Here,
$$
L_2=\frac{c}{b^{\frac{\gamma+\mu-1}{\gamma}}L_1},
$$
where $b$ comes from the upper bound \eqref{eq:second_upper_bound_v}, $c$ comes from the bound for the derivative \eqref{eq:sec_upper_bound_dot_v} and $L_1$ comes from Lemma 1 with $p=\gamma+\mu-1$ and $q=\gamma$.
\end{lemma}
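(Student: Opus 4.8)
The plan is to chain together the three available estimates so that both sides of \eqref{eq:connect_v_dot_v} are expressed in terms of the same pair of quantities, namely $\|x(t)\|$ and $\int_{-h}^{0}\|x(t+\theta)\|^{\gamma+\mu-1}d\theta$: the upper bound of Lemma~\ref{lemma_2} controls $v(x_t)$ from above by a $\gamma$-degree expression, the technical inequality of Lemma~\ref{lemma_inequality} lets me raise that expression to the power $\frac{\gamma+\mu-1}{\gamma}$ while converting its degree to $\gamma+\mu-1$, and the derivative estimate of Lemma~\ref{lemma_3} bounds $\frac{dv}{dt}$ from above by a negative multiple of exactly that $(\gamma+\mu-1)$-degree expression. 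The hypothesis $\|x_t\|_h\leq\delta<\min\{H_1,H_2\}$ is precisely what guarantees that Lemma~\ref{lemma_2} and Lemma~\ref{lemma_3} are simultaneously applicable along the whole trajectory.

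Concretely, I would first raise the upper bound \eqref{eq:second_upper_bound_v} to the power $\frac{\gamma+\mu-1}{\gamma}$, which is legitimate since both sides are nonnegative and the exponent exceeds one, obtaining
$$v(x_t)^{\frac{\gamma+\mu-1}{\gamma}}\leq b^{\frac{\gamma+\mu-1}{\gamma}}\left(\|x(t)\|^{\gamma}+\int_{-h}^{0}\|x(t+\theta)\|^{\gamma}d\theta\right)^{\frac{\gamma+\mu-1}{\gamma}}.$$
Next I would apply Lemma~\ref{lemma_inequality} with $p=\gamma+\mu-1$ and $q=\gamma$; the required $p>q\geq1$ holds because $\mu>1$ and $\gamma\geq2$. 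This replaces the bracketed term raised to the power $p/q$ by $L_1$ times the corresponding $(\gamma+\mu-1)$-degree expression, giving
$$v(x_t)^{\frac{\gamma+\mu-1}{\gamma}}\leq b^{\frac{\gamma+\mu-1}{\gamma}}L_1\left(\|x(t)\|^{\gamma+\mu-1}+\int_{-h}^{0}\|x(t+\theta)\|^{\gamma+\mu-1}d\theta\right).$$

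Finally I would rearrange the last inequality to bound the bracketed $(\gamma+\mu-1)$-degree expression from below by $v(x_t)^{\frac{\gamma+\mu-1}{\gamma}}\bigl/\bigl(b^{\frac{\gamma+\mu-1}{\gamma}}L_1\bigr)$ and substitute it into the derivative estimate \eqref{eq:sec_upper_bound_dot_v} of Lemma~\ref{lemma_3}, whose coefficient $c>0$ multiplies exactly this expression with a minus sign. The result is
$$\frac{dv(x_t)}{dt}\leq -\frac{c}{b^{\frac{\gamma+\mu-1}{\gamma}}L_1}\,v(x_t)^{\frac{\gamma+\mu-1}{\gamma}},$$
which is the claimed bound with $L_2=c\bigl/\bigl(b^{\frac{\gamma+\mu-1}{\gamma}}L_1\bigr)$. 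There is no substantial obstacle here; the proof is a routine composition of the three preceding results, and the only points demanding care are invoking all of them on the same admissible region — secured by $\delta<\min\{H_1,H_2\}$ — and applying Lemma~\ref{lemma_inequality} with the exact assignment $p=\gamma+\mu-1$, $q=\gamma$ so that its constant $L_1$ is the one entering $L_2$.
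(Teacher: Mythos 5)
Your proposal is correct and follows essentially the same argument as the paper: the paper chains the derivative bound \eqref{eq:sec_upper_bound_dot_v}, Lemma~\ref{lemma_inequality} with $p=\gamma+\mu-1$, $q=\gamma$, and the upper bound \eqref{eq:second_upper_bound_v} in exactly this way, merely writing the chain starting from $\frac{dv(x_t)}{dt}$ rather than from $v(x_t)^{\frac{\gamma+\mu-1}{\gamma}}$ as you do. The difference is purely one of presentation order, with identical constants and hypotheses.
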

\begin{proof}
Combining \eqref{eq:sec_upper_bound_dot_v} with Lemma \ref{lemma_inequality}, we get
\begin{align*}
&\frac{dv(x_t)}{dt}\leq -c\left(\|x(t)\|^{\gamma+\mu-1}+\int_{-h}^{0}\|x(t+\theta)\|^{\gamma+\mu-1}d\theta\right)\\
&\leq-\frac{c}{L_1b^{\frac{\gamma+\mu-1}{\gamma}}}\left[b\left(\|x(t)\|^{\gamma}+\int_{-h}^{0}\|x(t+\theta)\|^{\gamma}d\theta\right)\right]^{\frac{\gamma+\mu-1}{\gamma}}.
\end{align*}
The lemma now follows from the upper bound \eqref{eq:second_upper_bound_v}.
\end{proof}

By using the comparison lemma \citep{khalil1996nonlinear}, we are finally able to present estimates for the solutions.
\begin{theorem} \label{th:estimation_LK}
Let system \eqref{eq:delay-free_system} be asymptotically stable. The solutions of system \eqref{eq:delay_system} with initial functions satisfying \mbox{$\|\varphi\|_h< \Delta,$} where $\Delta$ is defined in \eqref{eq:attraction_LK}, admit an estimate of the form 
\begin{equation}
\label{est_LKr_final}
    \|x(t,\varphi)\| \leq \hat{c}_1 \|\varphi\|_h\left[1+\hat{c}_2\|\varphi\|_h^{\mu-1}t\right]^{-\frac{1}{\mu-1}},\quad t\geq 0,
\end{equation}
where 
\begin{align*}
\hat{c}_1&=\left(\frac{k_1+\beta \Delta^{\mu - 1}}{a_1}\right)^\frac{1}{\gamma}=\frac{\delta}{\Delta},\\ \hat{c}_2&=\frac{c}{b}\left(\frac{\mu-1}{\gamma}\right)\left(\frac{k_1+\beta \Delta^{\mu - 1}}{2b\max\{1,h\}}\right)^{\frac{\mu-1}{\gamma}}.
\end{align*}
\end{theorem}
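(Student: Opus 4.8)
The plan is to integrate the scalar differential inequality of Lemma~\ref{lemma_connect_v_dot_v} by the comparison lemma and then convert the resulting bound on $v(x_t)$ into a bound on $\|x(t)\|$ via the lower bound of Lemma~\ref{lemma_1}. First I would check that the hypothesis of Lemma~\ref{lemma_connect_v_dot_v} holds along the whole trajectory: since $\|\varphi\|_h<\Delta$, the Remark following Theorem~\ref{thm:attr_region_LK} gives $\|x(t,\varphi)\|<\delta$ for all $t\ge0$, while on $[-h,0]$ we have $\|x(\theta)\|=\|\varphi(\theta)\|<\Delta\le\delta$ (the inequality $\Delta\le\delta$ follows from \eqref{eq:attraction_LK} because $a_1\le k_1$). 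Hence $\|x_t\|_h<\delta$ for every $t\ge0$, and the inequality $\frac{dv(x_t)}{dt}\le -L_2\,v(x_t)^{(\gamma+\mu-1)/\gamma}$ holds for all $t\ge0$.

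Next I would apply the comparison lemma \citep{khalil1996nonlinear} with the scalar separable problem $\dot u=-L_2\,u^{(\gamma+\mu-1)/\gamma}$, $u(0)=v(\varphi)$. Writing the exponent as $(\gamma+\mu-1)/\gamma=1+(\mu-1)/\gamma$ and integrating $u^{-(\gamma+\mu-1)/\gamma}\dot u=-L_2$, one gets $u(t)^{-(\mu-1)/\gamma}=v(\varphi)^{-(\mu-1)/\gamma}+\frac{\mu-1}{\gamma}L_2\,t$, so that
\begin{equation*}
v(x_t)\le u(t)=v(\varphi)\left[1+\frac{\mu-1}{\gamma}L_2\,v(\varphi)^{\frac{\mu-1}{\gamma}}\,t\right]^{-\frac{\gamma}{\mu-1}},\qquad t\ge0.
\end{equation*}
Retaining only the first term of the lower bound \eqref{eq:firs_lower_bound_v}, namely $a_1\|x(t)\|^\gamma\le v(x_t)$, I would obtain
\begin{equation*}
\|x(t)\|\le\left(\frac{v(\varphi)}{a_1}\right)^{\frac1\gamma}\left[1+\frac{\mu-1}{\gamma}L_2\,v(\varphi)^{\frac{\mu-1}{\gamma}}\,t\right]^{-\frac{1}{\mu-1}}.
\end{equation*}

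It remains to replace $v(\varphi)$ by a quantity depending only on $\|\varphi\|_h$, and this is the step that requires care: one cannot bound the prefactor and the bracket separately, because $v(\varphi)$ enters the bracket through a negative power and the two substitutions then push in opposite directions. The clean resolution is to note that the whole right-hand side above is nondecreasing in $v(\varphi)$; indeed, raising it to the power $-(\mu-1)$ gives $a_1^{(\mu-1)/\gamma}v(\varphi)^{-(\mu-1)/\gamma}+a_1^{(\mu-1)/\gamma}\frac{\mu-1}{\gamma}L_2\,t$, which is nonincreasing in $v(\varphi)$. I may therefore substitute the upper bound $v(\varphi)\le(k_1+\beta\Delta^{\mu-1})\|\varphi\|_h^{\gamma}$ coming from \eqref{eq:third_upper_bound_v} together with $\|\varphi\|_h<\Delta$. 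The factor $\|\varphi\|_h^{\gamma}$ raised to $1/\gamma$ yields the prefactor $\hat c_1\|\varphi\|_h$ with $\hat c_1=\bigl((k_1+\beta\Delta^{\mu-1})/a_1\bigr)^{1/\gamma}$, which equals $\delta/\Delta$ by the defining equation \eqref{eq:attraction_LK}; the same substitution inside the bracket produces the coefficient $\frac{\mu-1}{\gamma}L_2(k_1+\beta\Delta^{\mu-1})^{(\mu-1)/\gamma}\|\varphi\|_h^{\mu-1}$ of $t$, and unwinding $L_2=c/\bigl(b^{(\gamma+\mu-1)/\gamma}L_1\bigr)$ with $L_1=(2\max\{1,h\})^{(\mu-1)/\gamma}$ reduces this to exactly $\hat c_2\|\varphi\|_h^{\mu-1}$, giving \eqref{est_LKr_final}.

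The only genuine obstacle is this last monotonicity argument, which licenses substituting the upper bound for $v(\varphi)$ into the full expression at once; everything else is the explicit integration of a power-law scalar ODE and the bookkeeping that collapses the constants into $\hat c_1$ and $\hat c_2$.
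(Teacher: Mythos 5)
Your proof is correct and relies on the same machinery as the paper's: Lemma~\ref{lemma_connect_v_dot_v}, the comparison lemma, the lower bound \eqref{eq:firs_lower_bound_v}, and the upper bound \eqref{eq:third_upper_bound_v}. The one place where you diverge is precisely the step you flag as ``the only genuine obstacle'': you initialize the comparison equation at $u(0)=v(\varphi)$, obtain a bound that is a function of $v(\varphi)$, and then need the monotonicity argument (raising the expression to the power $-(\mu-1)$) to justify substituting the upper bound $(k_1+\beta\Delta^{\mu-1})\|\varphi\|_h^\gamma$ for $v(\varphi)$ into the whole expression at once. The paper sidesteps this issue entirely by initializing the comparison equation at $u_0=(k_1+\beta\Delta^{\mu-1})\|\varphi\|_h^\gamma$ directly; since $v(\varphi)<u_0$ by \eqref{eq:third_upper_bound_v}, the comparison lemma immediately yields $v(x_t)\le u(t)$ with $u(t)$ already expressed in terms of $\|\varphi\|_h$, so no monotonicity argument is ever needed. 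Your argument is nonetheless valid --- the expression raised to the power $-(\mu-1)$ is indeed affine and decreasing in $v(\varphi)^{-(\mu-1)/\gamma}$-free terms, hence nonincreasing in $v(\varphi)$, and the constants unwind to exactly $\hat{c}_1$ and $\hat{c}_2$ --- so both routes give the same estimate; the paper's initialization is simply the cleaner way to organize it. A small point in your favour: you verify explicitly that $\|x_t\|_h<\delta$ along the whole trajectory (via the Remark after Theorem~\ref{thm:attr_region_LK} together with $\Delta\le\delta$), a hypothesis of Lemma~\ref{lemma_connect_v_dot_v} that the paper's proof leaves implicit.
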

\begin{proof}
Since $v(x_t)$ fulfils the differential inequality \eqref{eq:connect_v_dot_v}, we take $t_0=0$ and consider the comparison equation
\begin{equation}
\label{eq:comparison_LK}
    \frac{du(t)}{dt} = -L_2u^{\frac{\gamma+\mu-1}{\gamma}}
(t)
\end{equation}
with initial condition
\begin{equation*}
   u(0)=u_0=(k_1+\beta \Delta^{\mu - 1})\|\varphi\|_h^\gamma.
\end{equation*}
Such choice of $u_0$ implies $v(\varphi)< u_0$ due to \eqref{eq:third_upper_bound_v}, if \mbox{$\|\varphi\|_h<\Delta.$} The solution of equation \eqref{eq:comparison_LK} can be easily obtained by using the method of separation of variables:
\begin{equation*}
    u(t) = u_0\left[1+L_2\left(\frac{\mu-1}{\gamma}\right)u_0^{\frac{\mu-1}{\gamma}}t\right]^{-\frac{\gamma}{\mu-1}}.
\end{equation*}
Hence, using the comparison lemma and bound \eqref{eq:firs_lower_bound_v}, we get
\begin{equation*}
   a_1\|x(t,\varphi)\|^\gamma\leq v(x_t) \leq u_0\left[1+L_2\left(\frac{\mu-1}{\gamma}\right)u_0^{\frac{\mu-1}{\gamma}}t\right]^{-\frac{\gamma}{\mu-1}}.
\end{equation*}
The result follows from the previous inequality.
\end{proof}
\begin{remark}
The structure of the estimates \eqref{eq:final_est_LR} and \eqref{est_LKr_final} is the same, and it is similar to that of a counterpart for delay free homogeneous systems \citep{zubov1964methods}.
\end{remark}

\subsection{Scalar Case}
In this section, we present less conservative estimates for the solutions of
the scalar equation of the form
\begin{equation}\label{scalar}
    \dot{x}(t)=\alpha_1 x^\mu(t)+\alpha_2 x^\mu(t-h),
\end{equation}
where the constants $\alpha_1,\alpha_2 \in \mathbb{R}$, $x\in\mathbb{R}$, $h\in\mathbb{R}^+$ and $\mu>1$ is an odd entire number. Assume that $\alpha_1+\alpha_2<0$ which implies the asymptotic stability of the trivial solution of \eqref{scalar}. Take $\mathrm{w}=-2(\alpha_1+\alpha_2)>0,$ $V(x)=x^2,$ and use the following modification of functional \eqref{eq:functional} presented in \citet{alexandrova2019lyapunov}:
\begin{equation}\label{scalar_functional}
v(\varphi)=\left(\varphi(0) + \alpha_2\int_{-h}^0 \varphi^\mu(\theta)d\theta \right)^2+\int_{-h}^{0}(\mathrm{w_1}+(h+\theta)\mathrm{w_2})\varphi^{\mu+1}(\theta)d\theta.
\end{equation}
Here, $\mathrm{w}_1, \mathrm{w}_2>0$ are such that $\mathrm{w}_0=\mathrm{w}-\mathrm{w}_1-h\mathrm{w}_2>0$. Bounds \eqref{eq:firs_lower_bound_v}, \eqref{eq:second_upper_bound_v} and \eqref{eq:sec_upper_bound_dot_v} for functional \eqref{scalar_functional} take the form
\begin{align*}
    v(\varphi)&\geq a_1\varphi^2(0)+a_2\int_{-h}^0\varphi^{\mu+1}(\theta)d\theta,\quad |\varphi|_h\leq \delta,\\
    v(\varphi)&\leq b\left(\varphi^2(0)+\int_{-h}^0\varphi^2(\theta)d\theta\right),\quad |\varphi|_h\leq \delta,\notag\\
    \frac{dv(x_t)}{dt}&\leq -c\left(x^{\mu+1}(t)+\int_{-h}^{0}x^{\mu+1}(t+\theta)d\theta\right),\quad |x_t|_h\leq \delta.\notag
\end{align*}
Here, $\delta<\min\{H_1,H_2\},$ and
\begin{align*}
H_1&=\left(\dfrac{\mathrm{w}_1\chi^{2}}{|\alpha_2|}\right)^{\frac{1}{\mu-1}},\quad H_2 = \left(\min\left\{\dfrac{\mathrm{w}_0}{hL},\dfrac{\mathrm{w}_2}{L}\right\}\right)^\frac{1}{\mu-1},
    \quad 0<\chi<\sqrt{\dfrac{1}{h|\alpha_2|}},\\ L&=|\alpha_2||\alpha_1+\alpha_2|,\quad
    a_1=1-\chi^2 h |\alpha_2|>0,\quad a_2=\mathrm{w}_1 - |\alpha_2|\chi^{-2}\delta^{\mu-1}>0,\\
b&=\max\{1+|\alpha_2| h;\bigl(|\alpha_2|(1+|\alpha_2|h)\delta^{\mu-1}+\mathrm{w}_1+h\mathrm{w}_2\bigr)\delta^{\mu-1}\},\\
c&=\min\{\mathrm{w}_0-hL\delta^{\mu-1};\mathrm{w}_2-L\delta^{\mu-1}\}.
\end{align*}

The functional admits also an upper bound
\begin{gather*}
    v(\varphi)\leq \varphi^2(0) +\beta |\varphi|_h^{\mu+1},
\end{gather*}
where $\beta=(2|\alpha_2|+ \alpha_2^2h \delta^{\mu-1} + \mathrm{w}_1+h\mathrm{w}_2)h$. 
As in Theorem~\ref{thm:attr_region_LK}, if $\Delta$ is a positive root of equation
\begin{equation}\label{eq:Delta_scalar}
    \Delta^2 +\beta \Delta^{\mu+1}=a_1\delta^2,
\end{equation}
then the set of initial functions $|\varphi|_h<\Delta$
is the estimate of the region of attraction of the trivial solution of \eqref{scalar}. Now, we present another
technical lemma, which is less conservative than Lemma~\ref{lemma_inequality}.
\begin{lemma} \label{lemma_inequality_scalar}
Let $k\geq2$ be an entire number. Then,
\begin{equation} \label{inqx_scalar}
    \left(x^2(t)+\int_{-h}^{0}x^2(t+\theta)d\theta\right)^k  
\leq L_1\left(x^{2k}(t)+\int_{-h}^{0}x^{2k}(t+\theta)d\theta\right),
\end{equation}
where $L_1=2^{k-2}(1+h)^{k-1}$.
\end{lemma}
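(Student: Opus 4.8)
The plan is to prove the inequality by induction on the integer $k\ge2$, mirroring the inductive argument behind Lemma~\ref{lemma_inequality} but tracking the cross terms more carefully so as to reach the sharper constant. Throughout I abbreviate $A=x^2(t)$, $B=\int_{-h}^{0}x^2(t+\theta)\,d\theta$, and $S_j=x^{2j}(t)+\int_{-h}^{0}x^{2j}(t+\theta)\,d\theta$, so that the claim reads $(A+B)^k\le 2^{k-2}(1+h)^{k-1}S_k$. The two elementary tools are the pointwise Young-type inequality $d^p g^q\le d^{p+q}+g^{p+q}$ (valid for $p,q\ge1$, $d,g\ge0$) already used in the proof of Lemma~\ref{lemma_2}, and H\"older's inequality in the form $\left(\int_{-h}^{0}x^2\,d\theta\right)^j\le h^{j-1}\int_{-h}^{0}x^{2j}\,d\theta$.

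For the base case $k=2$ I would expand $(A+B)^2=A^2+2AB+B^2$. Here $A^2=x^4(t)$, while $B^2\le h\int_{-h}^{0}x^4(t+\theta)\,d\theta$ by H\"older, and the cross term is estimated pointwise in $\theta$:
\begin{equation*}
2AB=2\int_{-h}^{0}x^2(t)x^2(t+\theta)\,d\theta\le\int_{-h}^{0}\bigl(x^4(t)+x^4(t+\theta)\bigr)\,d\theta=h\,x^4(t)+\int_{-h}^{0}x^4(t+\theta)\,d\theta.
\end{equation*}
Summing the three contributions gives $(A+B)^2\le(1+h)S_2$, which is precisely $2^{0}(1+h)^{1}S_2$.

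For the inductive step, assuming $(A+B)^k\le 2^{k-2}(1+h)^{k-1}S_k$, I would write $(A+B)^{k+1}=(A+B)(A+B)^k$ and reduce the whole step to the single claim $(A+B)S_k\le 2(1+h)S_{k+1}$, since the constant then multiplies correctly to $2^{k-1}(1+h)^{k}=2^{(k+1)-2}(1+h)^{(k+1)-1}$. Expanding $(A+B)S_k$ produces four terms: $A\,x^{2k}(t)=x^{2(k+1)}(t)$; the two mixed products $A\int x^{2k}$ and $x^{2k}(t)\,B$, each bounded under the integral sign by $d^p g^q\le d^{p+q}+g^{p+q}$ to give at most $h\,x^{2(k+1)}(t)+\int_{-h}^{0}x^{2(k+1)}\,d\theta$; and the product of integrals $B\int_{-h}^{0}x^{2k}\,d\theta$, which I would write as the double integral $\int_{-h}^{0}\!\int_{-h}^{0}x^2(t+\theta_1)x^{2k}(t+\theta_2)\,d\theta_1\,d\theta_2$ and bound, again via the pointwise inequality, by $2h\int_{-h}^{0}x^{2(k+1)}\,d\theta$. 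Collecting the four contributions yields $(A+B)S_k\le(1+2h)x^{2(k+1)}(t)+(2+2h)\int_{-h}^{0}x^{2(k+1)}\,d\theta\le 2(1+h)S_{k+1}$, closing the induction.

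The main obstacle is the bookkeeping in the inductive step, and in particular the product-of-integrals term $B\int x^{2k}$: the double-integral device is what lets one avoid an extra factor of $2$ coming from a naive application of convexity, and thereby land exactly on the multiplier $2(1+h)$ needed for the constant $2^{k-2}(1+h)^{k-1}$ to propagate. A secondary point worth noting is that the restriction $k\ge2$ is essential — the inequality fails at $k=1$, where the claimed constant is $\tfrac12$ — so the induction must genuinely start at the base case $k=2$ rather than at $k=1$.
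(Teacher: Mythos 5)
Your proof is correct and follows essentially the same route as the paper's: induction starting at $k=2$ (base case via H\"older), with the inductive step reduced to bounding $(A+B)S_k$ by $2(1+h)S_{k+1}$ using the pointwise inequality $d^pg^q\le d^{p+q}+g^{p+q}$ on the cross terms. The paper states this step tersely; your explicit bookkeeping of the four products, including the double-integral treatment of $B\int_{-h}^{0}x^{2k}(t+\theta)\,d\theta$, is exactly the calculation it relies on.
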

\begin{proof}
The proof is carried out by mathematical induction. It is easy to verify \eqref{inqx_scalar} for $k=2$ taking into account that
$$
\left(\int_{-h}^{0}x^2(t+\theta)d\theta\right)^2\leq h\int_{-h}^{0}x^4(t+\theta)d\theta.
$$
Assume that \eqref{inqx_scalar} holds for a $k\geq 2,$ then
\begin{multline*}
    \left(x^2(t)+\int_{-h}^{0}x^2(t+\theta)d\theta\right)^{k+1}\leq 2^{k-2}(1+h)^{k-1}\\
    \times\left(x^{2k}(t)+\int_{-h}^{0}\!\!x^{2k}(t+\theta)d\theta\right)
    \!\!\left(x^2(t)+\int_{-h}^{0}\!\!x^2(t+\theta)d\theta\right).
\end{multline*}
Using $d^p g^q\leq d^{p+q}+g^{p+q}$ in the right-hand side of this inequality, we get
\begin{multline*} 
    \left(x^2(t)+\int_{-h}^{0}x^2(t+\theta)d\theta\right)^{k+1}\leq 2^{k-1}(1+h)^{k}\left(x^{2(k+1)}(t)+\int_{-h}^{0}x^{2(k+1)}(t+\theta)d\theta\right),
\end{multline*}
 and the lemma is proved.
\end{proof}
Using Lemma~\ref{lemma_inequality_scalar} with $k=(\mu+1)/2\in\mathbb{Z}$ instead of Lemma~\ref{lemma_inequality} and repeating the steps of the previous section, we arrive at the following estimates for solutions of \eqref{scalar}.
\begin{theorem} \label{th:estimation_LK_scalar}
If $\alpha_1+\alpha_2<0,$ then
the solutions of equation \eqref{scalar} with initial functions satisfying \mbox{$|\varphi|_h< \Delta,$} where $\Delta$ is a positive root of \eqref{eq:Delta_scalar}, admit an estimate of the form 
\begin{equation}
\label{est_LKr_final_scalar}
    |x(t,\varphi)| \leq \hat{c}_1 |\varphi|_h\left[1+\hat{c}_2|\varphi|_h^{\mu-1}t\right]^{-\frac{1}{\mu-1}},
\end{equation}
where 
\begin{align*}
\hat{c}_1&=\left(\frac{1 +\beta \Delta^{\mu-1}}{a_1}\right)^\frac{1}{2}=\frac{\delta}{\Delta},\\ \hat{c}_2&=\frac{c\left(\mu-1\right)}{b}\left(\frac{1 +\beta \Delta^{\mu-1}}{2b(1+h)}\right)^{\frac{\mu-1}{2}}.
\end{align*}
\end{theorem}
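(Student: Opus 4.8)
The plan is to transpose the argument used for Theorem~\ref{th:estimation_LK} to the scalar setting, where $\gamma=2$ so that $\gamma+\mu-1=\mu+1$, and to replace Lemma~\ref{lemma_inequality} by the sharper Lemma~\ref{lemma_inequality_scalar}. I would first derive the scalar analogue of Lemma~\ref{lemma_connect_v_dot_v}. Starting from the derivative bound for functional \eqref{scalar_functional}, namely $\frac{dv(x_t)}{dt}\leq -c\bigl(x^{\mu+1}(t)+\int_{-h}^{0}x^{\mu+1}(t+\theta)\,d\theta\bigr)$, I apply Lemma~\ref{lemma_inequality_scalar} with $k=(\mu+1)/2$ (an integer since $\mu$ is an odd entire number) to obtain
\begin{equation*}
\frac{dv(x_t)}{dt}\leq -\frac{c}{L_1 b^{\frac{\mu+1}{2}}}\left[b\left(x^2(t)+\int_{-h}^{0}x^2(t+\theta)\,d\theta\right)\right]^{\frac{\mu+1}{2}}.
\end{equation*}
The scalar upper bound $v(\varphi)\leq b\bigl(\varphi^2(0)+\int_{-h}^{0}\varphi^2(\theta)\,d\theta\bigr)$ then turns this into the autonomous differential inequality $\frac{dv(x_t)}{dt}\leq -L_2\,v(x_t)^{\frac{\mu+1}{2}}$ with $L_2=c/(b^{\frac{\mu+1}{2}}L_1)$, valid while $|x_t|_h\leq\delta$.

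Next I would run the comparison argument. Consider $\dot u=-L_2 u^{\frac{\mu+1}{2}}$ with $u_0=(1+\beta\Delta^{\mu-1})|\varphi|_h^{2}$, a value chosen so that the upper bound $v(\varphi)\leq\varphi^2(0)+\beta|\varphi|_h^{\mu+1}$ guarantees $v(\varphi)<u_0$ whenever $|\varphi|_h<\Delta$. Separation of variables yields
\begin{equation*}
u(t)=u_0\left[1+L_2\,\frac{\mu-1}{2}\,u_0^{\frac{\mu-1}{2}}t\right]^{-\frac{2}{\mu-1}}.
\end{equation*}
The comparison lemma then gives $v(x_t)\leq u(t)$, and the scalar lower bound $a_1 x^2(t)\leq v(x_t)$ produces $a_1|x(t,\varphi)|^2\leq u(t)$. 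Taking square roots and substituting $u_0$ collects the prefactor as $\hat c_1=((1+\beta\Delta^{\mu-1})/a_1)^{1/2}$ and the coefficient of $t$ as $\hat c_2$. Inserting $L_1=2^{k-2}(1+h)^{k-1}=\tfrac12\bigl(2(1+h)\bigr)^{(\mu-1)/2}$ with $k=(\mu+1)/2$, the extra factor $\tfrac12$ carried by $L_1$ cancels the $\tfrac{\mu-1}{2}$ produced by the separation of variables, leaving the clean $(\mu-1)$ and reproducing exactly the constant $\hat c_2=\frac{c(\mu-1)}{b}\bigl(\frac{1+\beta\Delta^{\mu-1}}{2b(1+h)}\bigr)^{(\mu-1)/2}$ stated in the theorem.

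The step that deserves the most care is not this computation but the justification that all three functional inequalities may be used along the entire trajectory: each of them holds only on $|x_t|_h\leq\delta$. Hence I must first establish that $|\varphi|_h<\Delta$ forces $|x(t,\varphi)|<\delta$ for every $t\geq0$. This is the invariance conclusion of the scalar domain-of-attraction statement, with $\Delta$ the positive root of \eqref{eq:Delta_scalar}: it follows by combining the non-increase of $v(x_t)$ implied by the scalar version of \eqref{eq:sec_upper_bound_dot_v} with the scalar lower and upper bounds, exactly as in the remark following Theorem~\ref{thm:attr_region_LK}. Once this invariance is secured, the differential inequality for $v(x_t)$ is valid on all of $[0,\infty)$ and the comparison step goes through unchanged, so the only remaining verification is the elementary one that $k=(\mu+1)/2$ is an integer $\geq2$, which holds because $\mu>1$ is odd.
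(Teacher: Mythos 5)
Your proposal is correct and follows exactly the route the paper intends: the paper's own ``proof'' is the single remark that one should apply Lemma~\ref{lemma_inequality_scalar} with $k=(\mu+1)/2$ in place of Lemma~\ref{lemma_inequality} and repeat the steps of Section~\ref{Sect_LK}, which is precisely what you carry out, including the correct cancellation of the factor $\tfrac12$ in $L_1=\tfrac12\bigl(2(1+h)\bigr)^{(\mu-1)/2}$ against the $\tfrac{\mu-1}{2}$ from separation of variables to recover $\hat c_2$. Your added care about the invariance $|x_t|_h\leq\delta$ (via the scalar domain-of-attraction statement) is exactly the implicit prerequisite the paper relies on, so nothing is missing.
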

\begin{remark}
The reduction of conservatism in Theorem~\ref{th:estimation_LK_scalar} in comparison with Theorem~\ref{th:estimation_LK} is due to use of Lemma~\ref{lemma_inequality_scalar} and the presence of the additional summand in functional \eqref{scalar_functional}.
\end{remark}

\section{Illustrative Examples}  
\subsection{Example 1}
Consider a scalar equation of the form 
\begin{equation}\label{example}
    \dot{x}(t)=\alpha_1 x^3(t)+\alpha_2 x^3(t-h),
\end{equation}
where the constants $\alpha_1,\alpha_2 \in \mathbb{R}$, $x\in\mathbb{R}$ and $h\in\mathbb{R}^+$. In this case the homogeneity degree is $\mu=3$.
Take the system parameters $(\alpha_1,\alpha_2)=(-1,0.5)$, delay $h=10,$ and $\mathrm{w}=1.$ Now, compute the constants $m=\max\{|\alpha_1|,|\alpha_2|\}=1,\ m_1=3|\alpha_1|=3,\ m_2=3|\alpha_2|=1.5,\ k_0=k_1=1,\ k_2=k_3=2.$\\
First, we estimate the region of attraction using both approaches. We apply Theorem~2 with $\delta=H$ for the Lyapunov--Razumikhin approach, and solve equation \eqref{eq:Delta_scalar} tuning the parameters therein for the Lyapunov--Krasovskii one. The parameters and the obtained estimates are shown in Table~\ref{table_LK_attrac_region}.   
\begin{table}[h]
\caption{Constants for the estimates of attraction region, Example~1}
\label{table_LK_attrac_region}
\begin{center}
\begin{tabular}{cccccccccc}
\hline
\multicolumn{10}{c}{Lyapunov--Krasovskii approach}\\
\hline
$\Delta$ & $H_1$ & $H_2$ & $\delta$ & $\chi$ & $a_1$ & $\beta$ & $\mathrm{w}_0$ & $\mathrm{w}_1$ & $\mathrm{w}_2$\\
\hline
$0.067$ & $0.1012$ & $0.3$ & $0.1011$ & $0.32$ & $0.48$ & $17.7$ & $0.25$ & $0.05$ & $0.07$\\
\hline
\multicolumn{10}{c}{Lyapunov--Razumikhin approach}\\
\hline
\multicolumn{3}{c}{$\Delta$} & \multicolumn{2}{c}{$H$} & \multicolumn{3}{c}{$\kappa$} & \multicolumn{2}{c}{$K$}\\
\hline
\multicolumn{3}{c}{$0.0427$} & \multicolumn{2}{c}{$0.0443$} & \multicolumn{3}{c}{$1$} & \multicolumn{2}{c}{$1.001$}\\
 \hline
\end{tabular}
\end{center}
\end{table}
Next, we turn our attention to the estimates of the solutions. Although the estimate of the region of attraction is found to be less conservative via the Lyapunov--Krasovskii approach, we take  the same $\delta$ in Theorems \ref{th:estimation_LR} and \ref{th:estimation_LK_scalar} for comparison purposes. The constants characterising the estimates are shown in Table \ref{table_LK}.
\begin{table}[h]
\caption{Constants for the estimates of solutions, Example~1}
\label{table_LK}
\begin{center}
\begin{tabular}{cccccccccc}
\hline
\multicolumn{10}{c}{Lyapunov--Krasovskii approach}\\
\hline
$\delta$ & $H_1$ & $H_2$ & $\Delta$ & $\hat{c}_1$ & $\hat{c}_2$ & $\chi$ & $\mathrm{w}_0$ & $\mathrm{w}_1$ & $\mathrm{w}_2$\\
\hline
$0.01$ & $0.015$ & $0.26$ & $0.0099$ & $1.0014$ & $4.2\cdot 10^{-5}$ & $0.015$ & $0.33$ & $0.5$ & $0.017$\\
\hline
\multicolumn{10}{c}{Lyapunov--Razumikhin approach}\\
\hline
\multicolumn{2}{c}{$\delta$} & $H$ & \multicolumn{2}{c}{$\Delta$} & $\tilde{c}_1$ & \multicolumn{2}{c}{$\tilde{c}_2$} & $\alpha$ & $\rho$\\
\hline
\multicolumn{2}{c}{$0.01$} & $0.0443$ & \multicolumn{2}{c}{$0.009$} & $1.002$ & \multicolumn{2}{c}{$0.95$} & $2$ & $0.94$\\
 \hline
\end{tabular}
\end{center}
\end{table}

The estimate of the system response obtained via the Lyapunov--Krasovskii approach is then
\begin{equation*}
    |x(t,\varphi)| \leq 1.0014 |\varphi|_{10}\left[1+4.2\cdot 10^{-5}|\varphi|_{10}^{2}t\right]^{-\frac{1}{2}},
\end{equation*}
whereas using the Lyapunov--Razumikhin approach we arrive at
\begin{equation*}
    |x(t,\varphi)| \leq  1.002|\varphi|_{10}\left[1+0.95|\varphi|_{10}^{2}t\right]^{-\frac{1}{2}}.
\end{equation*}
For the initial condition $\varphi(\theta)=0.009$, $\theta \in [-10,0]$, the system response and the estimates \eqref{eq:final_est_LR} and \eqref{est_LKr_final_scalar} are depicted in Figure~\ref{figure_estimation} and Figure~\ref{figure_estimation_2} as continuous, dashed lines and dashed-dot lines, respectively. Although the Lyapunov--Krasovskii approach gives a better estimate at the beginning of the response as shown in  Figure~\ref{figure_estimation_2}, the bound obtained via the Lyapunov--Razumikhin one is much tighter, see Figure~\ref{figure_estimation}.
\begin{figure}[h]
     \centering
      \includegraphics[width=0.8\textwidth]{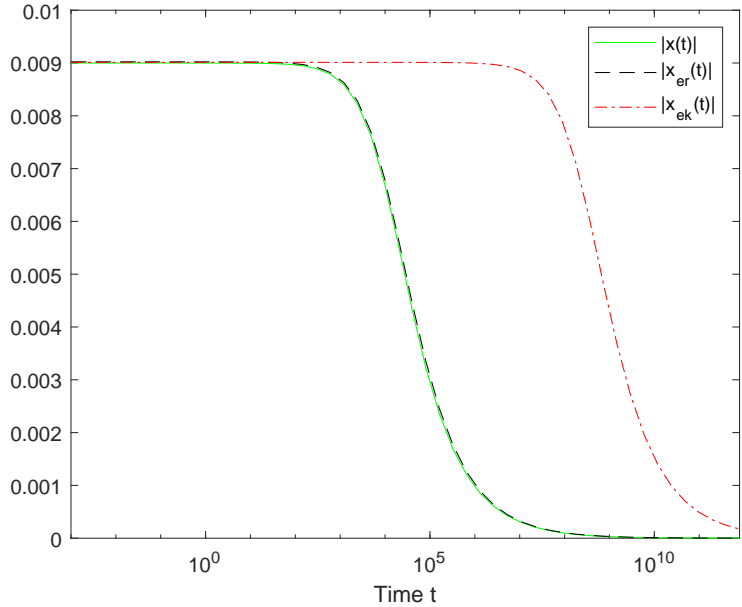}
      \caption{Estimates for the solution of \eqref{example} with a log scale for time}
      \label{figure_estimation}
  \end{figure}
  \begin{figure}[h]
     \centering
      \includegraphics[width=0.8\textwidth]{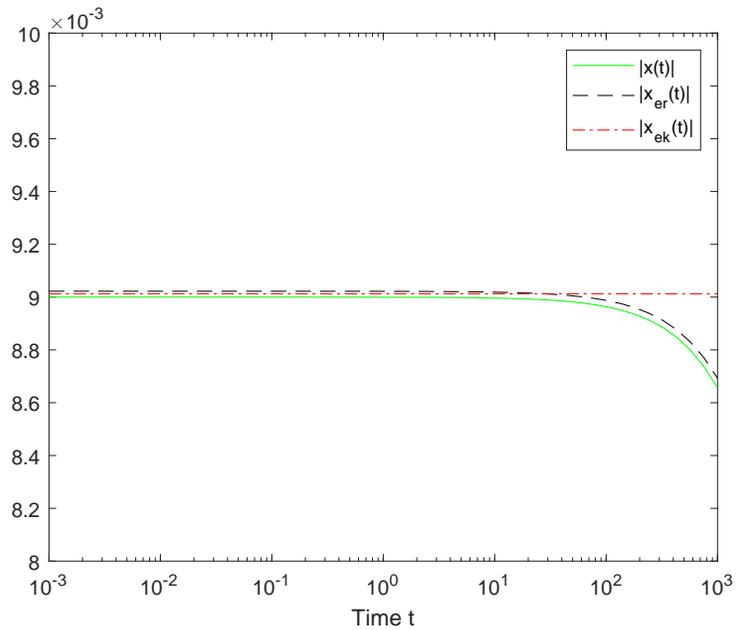}
      \caption{Estimates for the solution of \eqref{example} with a log scale for time}
      \label{figure_estimation_2}
  \end{figure}

\subsection{Example 2} 
Consider the system
\begin{align}
\dot{x}_1(t)&=x_2^\mu(t),\label{example_2}\\
\dot{x}_2(t)&=-x_1^\mu(t)-x_2^\mu(t-h),\notag
\end{align}
where $x_1(t),\ x_2(t)\in\mathbb{R}$ and $h\in\mathbb{R}^+$. Assume that the homogeneity degree $\mu$ of the right-hand side is a rational number with odd numerator and denominator, and $\mu>2.$ It is shown in \cite{book_in_Russian} that the delay-free system corresponding to \eqref{example_2} is asymptotically stable. The result is achieved with the help of the Lyapunov function
\begin{equation*}
    V(x)=\frac{1}{\mu+1}\left(x_1^{\mu+1}+x_2^{\mu+1}\right)+\zeta x_1^\mu x_2,
\end{equation*}
where a parameter $\zeta$ satisfies
\begin{equation*}
    0<\zeta< \min\left\{\frac{1}{\mu+1},\frac{4}{(\mu+1)^2}\right\}.
\end{equation*}
Furthermore, on the basis of computations in \cite{book_in_Russian} we obtain that along the solutions of \eqref{example_2} with $h=0,$
\begin{gather*}
    \frac{dV(x(t))}{dt}=-x_1^{2\mu}f\left(\frac{x_2}{x_1}\right),\quad\text{where}\quad f(z)=z^{2\mu}-\zeta\mu z^{\mu+1}+\zeta z^\mu+\zeta,
    \\
    f(z)\geq  \eta\left(z^{2\mu}+1\right),\quad \eta=\min\left\{\zeta;1-\zeta(\mu+1);\frac{\zeta}{1+\zeta}\left(1-\frac{\zeta(1+\mu)^2}{4}\right)\right\}.
\end{gather*}
Hence,
\begin{equation*}
    \frac{dV(x(t))}{dt}\leq -\eta\left(x_1^{2\mu}+x_2^{2\mu}\right)\leq -\mathrm{w}\|x\|^{2\mu},\quad\text{where}\quad \mathrm{w}=\frac{\eta}{2^{\mu-1}}.
\end{equation*}
First, calculate the necessary constants:
\begin{gather*}
m=\sqrt{2},\quad m_1=m_2=\mu,\quad k_0=(1/2)^\frac{\mu-1}{2}\left(\frac{1}{\mu+1}-\zeta\right),\quad k_1=\left(\frac{1}{\mu+1}+\zeta\right),\\
 k_2=\sqrt{(1+\zeta\mu)^2+(1+\zeta)^2},\quad k_3=\mu(1+\zeta\mu).
\end{gather*}
For the system parameters $\mu=3$, $\zeta=0.1$, $\mathrm{w}=0.0136$ and $h=1$, the constants characterising the estimates for the solutions of system \eqref{example_2} are shown in Table \ref{table_LK_2}.
\begin{table}[h]
\caption{Constants for the estimates of solutions, Example~2}
\label{table_LK_2}
\begin{center}
\begin{tabular}{cccccccccc}
\hline
\multicolumn{10}{c}{Lyapunov--Krasovskii approach}\\
\hline
$\delta$ & $H_1$ & $H_2$ & $\Delta$ & $\hat{c}_1$ & $\hat{c}_2$ & $\chi$ & $\mathrm{w}_0$ & $\mathrm{w}_1$ & $\mathrm{w}_2$\\
\hline
$0.001$ & $0.0111$ & $0.0058$ & $6.789\cdot 10^{-4}$ & $1.4728$ & $0.002$ & $0.39$ & $0.0022$ & $0.0092$ & $0.0022$\\
\hline
\multicolumn{10}{c}{Lyapunov--Razumikhin approach}\\
\hline
\multicolumn{2}{c}{$\delta$} & $H$ & \multicolumn{2}{c}{$\Delta$} & $\tilde{c}_1$ & \multicolumn{2}{c}{$\tilde{c}_2$} & $\alpha$ & $\rho$\\
\hline
\multicolumn{2}{c}{$0.001$} & $0.0066$ & \multicolumn{2}{c}{$6.8\cdot 10^{-4}$} & $1.4698$ & \multicolumn{2}{c}{$0.019$} & $2$ & $0.0643$\\
 \hline
\end{tabular}
\end{center}
\end{table}
For the initial condition $\varphi(\theta)=[4.8\cdot 10^{-4},\ 4.8\cdot 10^{-4}]$, $\theta \in [-1,0]$, the system response and the estimates obtained via the Lyapunov--Razumukhin and Lyapunov--Krasovskii approaches are depicted in Figure~\ref{figure_estimation_ex_2} as a continuous, dashed and dashed-dot line, respectively. We conclude that the former estimate is closer to the system response than the latter one.
  \begin{figure}[h]
     \centering
      \includegraphics[width=0.8\textwidth]{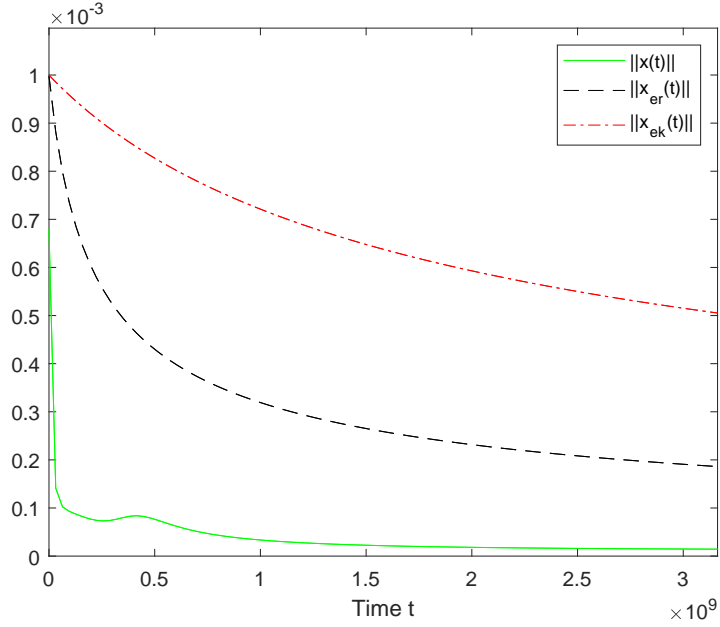}
      \caption{Estimates for the solution of system \eqref{example_2}}
      \label{figure_estimation_ex_2}
  \end{figure} 

\section{Discussion}
Unexpectedly, we obtain that the Lyapunov--Razumikhin approach gives much better results for homogeneous systems. Although in the first example, tuning some parameters of the Lyapunov--Krasovskii approach, we are able to achieve a less conservative bound for the region of attraction, as well as the closer estimate at the beginning of the response, this is not the case in general. Using the same value for $\delta$ in both approaches,
let us compare the estimates for the solutions \eqref{eq:final_est_LR} and \eqref{est_LKr_final}.
To this end, compare the values $\hat{c}_1$ and $\hat{c}_2$ with $A$ and $B$ (see the proof of Theorem~\ref{th:estimation_LR} in the Appendix) that precede $\tilde{c}_1$ and $\tilde{c}_2.$ 
While $\hat{c}_1$ and $A$ admit the same expression $\delta/\Delta,$ the values $\hat{c}_2$ and $B$ can be presented in the form
\begin{align*}
\hat{c}_2&=\frac{c}{bL_1}\left(\frac{\mu-1}{\gamma}\right)\left(\frac{a_1}{b}\right)^{\frac{\mu-1}{\gamma}}\left(\frac{\delta}{\Delta}\right)^{\mu-1},\\
B&=\rho\left(\frac{\mu-1}{\gamma}\right)k_0^{\frac{\mu-1}{\gamma}}\left(\frac{\delta}{\Delta}\right)^{\mu-1}.
\end{align*}
As the values $\delta$ and $\Delta$ are usually rather small in practice, conditions \eqref{firs_condition} and \eqref{second_condition} become insignificant, and one can often take $\rho=\overline{d}-\varepsilon,$ where $\varepsilon>0$ is arbitrarily small.
In this case, the constant
$B$ is close to
$$
\widetilde{B} = \frac{k_5}{k_1}\left(\frac{\mu-1}{\gamma}\right)\left(\frac{k_0}{k_1}\right)^{\frac{\mu-1}{\gamma}}\left(\frac{\delta}{\Delta}\right)^{\mu-1}.
$$
Clearly, the values $k_5,$ $k_1,$ and $k_0$ have a similar meaning for the Lyapunov function as the values $c,$ $b,$ and $a_1$ for the Lyapunov--Krasovskii functional.
The analysis of how the constants are built gives the expressions
\begin{gather*}
    a_1=k_0-\textup{const}\cdot\delta^{\mu-1},\quad
    b=\max\{k_1 + \textup{const}\cdot\delta^{\mu-1},\textup{const}\cdot\delta^{\mu-1}\},\\
    c=\min\{\textup{w}_0-\textup{const}\cdot\delta^{\mu-1},\textup{w}_2-\textup{const}\cdot\delta^{\mu-1}\},\quad k_5=\textup{w} - \textup{const}\cdot\delta^{\mu-1},
\end{gather*}
where ``$\textup{const}$'' means a different (positive) constant in each case. One can notice that $a_1<k_0,$ $b>k_1.$ Although it is not easy to compare the constants involved in the expressions for $c$ and $k_5$ directly, $\textup{w}>\textup{w}_0$ and $\textup{w}>\textup{w}_2$ imply that $k_5>c$ for a sufficiently small $\delta.$ Hence, if $\delta$ is sufficiently small, then
$$
\frac{k_5}{k_1}>\frac{c}{b},\quad \frac{k_0}{k_1}>\frac{a_1}{b}.
$$
In addition, $L_1>1.$
We arrive at the following conclusions:
\begin{itemize}
    \item[$\star$] First, $\hat{c}_2<\widetilde{B}\approx B<\tilde{c}_2$ at least when $\delta$ is rather small. Second, $\hat{c}_1<A<\tilde{c}_1,$ if $\Delta$ for the Lyapunov--Krasovskii approach is larger than those for the Lyapunov--Razumikhin one. If this is the case, then we are able to get a closer estimate at the beginning of the system response via the Lyapunov--Krasovskii method as in Example~1, see Figure~\ref{figure_estimation_2}. However, this is not the case in general. Moreover, since the constant in the denominator is in fact the one accounting for the rate of convergence, we conclude that the Lyapunov--Razumikhin approach gives in general a tighter bound, see Figures~\ref{figure_estimation} and \ref{figure_estimation_ex_2}.
    \item[$\star$] The sources of conservatism for the Lyapunov--Krasovskii approach are $(a)$ the multiplier $L_1 >1$ appearing due to the fact that the degrees of the upper bound for the functional and the bound for its derivative along the solutions are different, and $(b)$ the constants $c,$ $b,$ and $a_1$ for the Lyapunov--Krasovskii functional in comparison with $k_5,$ $k_1,$ and $k_0$ for the Lyapunov function.
    \item[$\star$] The sources of conservatism for the Lyapunov--Razumikhin approach are $(c)$ additional conditions \eqref{firs_condition} and \eqref{second_condition} on $\rho$ appearing due to the fact that the solution of the comparison equation should satisfy the Razumikhin condition, and $(d)$ the fact that the bound is first obtained for $t\geq h$ only, and that it is necessary to supplement it with the bound for $t\in [0,h].$
    \item[$\star$] For the scalar equation \eqref{scalar}, the Lyapunov--Krasovskii bound can be slightly improved (compare \eqref{est_LKr_final_scalar} to \eqref{est_LKr_final}) by introducing a less conservative multiplier $L_1>1.$
    \item[$\star$] Assuming that $V(x),$ and hence the constants $k_j,$ $j=0,\ldots,5,$ depend on $\mathrm{w}$ linearly, what is natural in view of \eqref{eq:bound_dot_V}, we obtain that the Lyapunov--Razumikhin bounds do not depend on $\mathrm{w}.$ At the same time, the Lyapunov--Krasovskii ones depend on the relation between $\mathrm{w}_0,$ $\mathrm{w}_1$ and $\mathrm{w}_2,$ hence these parameters may serve for optimization purposes.
\end{itemize}

The sources of conservatism for the Lyapunov--Razumikhin approach turn out to be insignificant. Conditions \eqref{firs_condition} and \eqref{second_condition} hold in the example because $\delta$ is small. As for $(d)$, the same bound obtained for $t\geq h$ holds for all $t\geq0$ in general. We conclude that the Lyapunov function of the delay free system works rather well, and better than the Lyapunov--Krasovskii functional, in the estimation of the convergence rate of a homogeneous time delay system solutions. This unexpected conclusion may be proceeding from the delay-independent stability property of the systems under consideration.
We also notice that slightly different assumptions on the right-hand sides are exploited in Sections~\ref{Sect_R} and \ref{Sect_LK}: differentiability of $f(x,y)$ with respect to $y$ in the Lyapunov--Razumikhin approach and with respect to $x$ in the Lyapunov--Krasovskii one.

\section{Conclusion}
In  this  paper,  we  obtain  estimates  of  the  response  of  nonlinear  homogeneous systems with delay and  right-hand side homogeneity degree strictly greater than one. The problem is addressed via the Lyapunov--Krasovskii and the Lyapunov--Razumikhin approaches. In both approaches, the results are based on the same Lyapunov function of the corresponding delay free system. This fact allows us to compare the approaches directly, in the light of some illustrative examples.\\

\section*{Acknowledgments}
The authors thank the anonymous reviewers for their useful suggestions that helped us to improve the presentation of the manuscript.

\section*{Funding}
The work of Irina V. Alexandrova was supported by the Russian Science Foundation, Project 19-71-00061.
The work of Gerson Portilla and Sabine Mondi\'e was supported by project CONACYT A1-S-24796 and project SEP-CINVESTAV 155, Mexico.

\bibliographystyle{apacite}
\bibliography{mybibfile}

\section*{APPENDIX}
For the sake of completeness, we present the proofs of Theorems \ref{th:attraction_LR} and \ref{th:estimation_LR} from \citet{aleksandrov2012asymptotic,aleksandrov2014delay, aleksandrov2016asymptotic}. First, notice that, in contrast with the original works and for comparison purposes, all constants involved in Theorems \ref{th:attraction_LR} and \ref{th:estimation_LR} are computed explicitly. Second, observe that the structure of the estimates for the response in Theorem \ref{th:estimation_LR} is the same as those in \citet{aleksandrov2012asymptotic} but the constants are presented in a different way and they cover a wider class of systems \citep{aleksandrov2016asymptotic}. We begin with an auxiliary lemma, which can be found in \citet{aleksandrov2014delay} in an implicit form.
\begin{lemma}
\label{lemma_Raz}
If $\|\varphi\|_h < \Delta,$ then
$$
\|x(t,\varphi)\| \leq K (\|\varphi\|_h + m h \|\varphi\|_h^\mu),\quad t\in [0,h].
$$
\end{lemma}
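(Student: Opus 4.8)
The plan is to exploit the fact that on the first interval $t\in[0,h]$ the delayed argument $x(t-h)=\varphi(t-h)$ is entirely prescribed by the initial function, so that \eqref{eq:delay_system} becomes an ordinary differential equation with known forcing. First I would pass to the integral form $x(t)=\varphi(0)+\int_0^t f(x(s),\varphi(s-h))\,ds$ and take Euclidean norms. Applying the growth bound \eqref{eq:bound_function} together with $\|\varphi(s-h)\|\leq\|\varphi\|_h$ (valid since $s-h\in[-h,0]$) and $\|\varphi(0)\|\leq\|\varphi\|_h$, and using $t\leq h$ to bound the forcing integral, yields the scalar integral inequality
$$u(t)\leq C+m\int_0^t u(s)^\mu\,ds,\qquad C:=\|\varphi\|_h+mh\|\varphi\|_h^\mu,$$
where $u(t)=\|x(t)\|$.

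Next I would convert this into a differential inequality amenable to the comparison lemma \citep{khalil1996nonlinear}: setting $R(t)=C+m\int_0^t u(s)^\mu\,ds$ one has $u(t)\leq R(t)$ and $\dot R(t)=mu(t)^\mu\leq mR(t)^\mu$, so $R$ is majorised by the solution $w$ of $\dot w=mw^\mu,\ w(0)=C$. Separation of variables gives the closed form $w(t)=C\bigl(1-(\mu-1)mC^{\mu-1}t\bigr)^{-\frac{1}{\mu-1}}$, which is increasing, so it suffices to control $w(h)$. The single quantitative input is that $\|\varphi\|_h<\Delta$, together with the strict monotonicity of $s\mapsto s+mhs^\mu$ and the defining relation \eqref{eq:attraction_LR}, forces $C<\kappa\delta/K$.

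The decisive step is the verification $w(h)\leq KC$, which I expect to be the main obstacle, since it is precisely where the particular form of $K$ must be reverse-engineered. Writing $a=(\mu-1)mhC^{\mu-1}$ and $b=(\mu-1)mh(\kappa\delta)^{\mu-1}$, so that $K^{\mu-1}=1+b$, the desired inequality $w(h)\leq KC$ is equivalent, after raising to the power $\mu-1$, to $(1-a)(1+b)\geq 1$, i.e. to $a\leq b/(1+b)$. But $C<\kappa\delta/K$ gives $C^{\mu-1}K^{\mu-1}<(\kappa\delta)^{\mu-1}$, hence $a\,K^{\mu-1}<b$, that is $a<b/(1+b)$; in particular $a<1$, so $w$ is defined throughout $[0,h]$, ruling out finite-time blow-up of $x$ before $h$. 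Chaining $u(t)\leq R(t)\leq w(t)\leq w(h)\leq KC$ for all $t\in[0,h]$ then yields the stated estimate.
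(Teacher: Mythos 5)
Your proof is correct and follows essentially the same route as the paper's: both reduce to the integral inequality $\|x(t)\|\leq C+m\int_0^t\|x(s)\|^\mu ds$ with $C=\|\varphi\|_h+mh\|\varphi\|_h^\mu$, compare with the scalar equation $\dot w=mw^\mu$, and use $C<\kappa\delta/K$ together with the definition of $K$ to show the blow-up time exceeds $h$ and that the solution at $t=h$ is bounded by $KC$. Your explicit algebraic verification via $a\,K^{\mu-1}<b$, i.e. $(1-a)(1+b)\geq 1$, is exactly the computation the paper leaves implicit in its statement that $1/\bigl((\mu-1)mS^{\mu-1}(\varphi)\bigr)>h$ and in the final evaluation of the denominator as $1/K$.
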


\begin{proof}
Denote $S(\varphi)=\|\varphi\|_h + m h \|\varphi\|_h^\mu,$
$$
u(t) = S(\varphi) + m \int_0^t \|x(s)\|^\mu d s,
$$
and observe that
$
\|x(t)\|\leq u(t),$  $t\in[0,h].$
Further,
$$
\dot{u}(t)= m \|x(t)\|^\mu \leq m u^\mu(t),\quad u(0) = S(\varphi)<\dfrac{\kappa \delta}{K}.
$$
Integrating the last inequality, we obtain
\begin{equation}
\label{bound_u}
u(t)\leq\dfrac{S(\varphi)}{\bigl(1-(\mu-1)m S^{\mu-1}(\varphi) t\bigr)^{\frac{1}{\mu-1}}},
\end{equation}
if $t<1/\bigl((\mu-1)m S^{\mu-1}(\varphi)\bigr).$
Now, verify that
$$
\dfrac{1}{(\mu-1)m S^{\mu-1}(\varphi)}>h.
$$
Hence, bound (\ref{bound_u}) holds for $t\in[0,h].$
Next,
\begin{multline*}
\|x(t)\|\leq u(t)\leq \dfrac{S(\varphi)}{\left(1-(\mu-1)m h \left(\dfrac{\kappa \delta}{K}\right)^{\mu-1}\right)^{\frac{1}{\mu-1}}}=K S(\varphi),\quad t\in[0,h],
\end{multline*}
and the result follows. 
\end{proof}

\textbf{Proof of Theorem \ref{th:attraction_LR}}. \citep{aleksandrov2014delay}.
The Razumikhin condition \eqref{razu_condition} implies
$$\|x(\xi)\|< \left(\frac{\alpha k_1}{k_0}\right)^\frac{1}{\gamma}\|x(t)\|,\quad  \forall\, \xi\in[t-2h,t],\quad \forall\, t\geq h.$$
Differentiating $V(x(t))$ along the solutions of system \eqref{eq:delay_system} satisfying the Razumikhin condition and applying the mean value theorem, we get 
\begin{multline*}
    \frac{dV(x(t))}{dt}= \left(\frac{\partial V(x)}{\partial x}\right)^Tf(x(t),x(t))-h\left(\frac{\partial V(x)}{\partial x}\right)^T\\
    \times\int_0^1\frac{\partial f(x(t),x(t-\theta h))}{\partial x(t-\theta h)} f(x(t-\theta h),x(t-\theta h-h))d\theta\\
    \leq -\mathrm{w}\|x(t)\|^{\gamma+\mu-1} + k_4 \|x(t)\|^{\gamma+2\mu-2},
\end{multline*}
where $$k_4=2hmm_2k_2\left(\frac{\alpha k_1}{k_0}\right)^\frac{\mu}{\gamma}\left(1+\left(\frac{\alpha k_1}{k_0}\right)^\frac{\mu-1}{\gamma}\right).$$
Taking $H=\left(\mathrm{w}/k_4\right)^\frac{1}{\mu-1}$ and an arbitrary $\delta\in(0,H),$ we arrive at \eqref{eq:der_Razum} with
$k_5=\mathrm{w}-k_4\delta^{\mu-1}.$ \\
Consider an arbitrary solution $x(t)$ of system \eqref{eq:delay_system} with initial condition \mbox{$\|\varphi\|_h<\Delta$.} It follows  that $\Delta<\kappa \delta$ from equation \eqref{eq:attraction_LR}, hence
\begin{equation*}
    V(\varphi(\theta))\leq k_1\Delta^\gamma<k_0\delta^\gamma,\quad \theta\in [-h,0].
\end{equation*}
Lemma~\ref{lemma_Raz} implies $V(x(t))\leq k_1\|x(t)\|^\gamma<k_0 \delta^\gamma,$ $t\in[0,h].$ Assume that $\bar{t}>h$ is the first time instant such that the inequality is violated: $V(x(\bar{t}))=k_0 \delta^\gamma,$ $\|x(t)\|\leq\delta$ for $t\leq\bar{t}.$ Then, formula \eqref{eq:der_Razum} provides a contradiction at $t=\bar{t}$ immediately. Hence, $V(x(t))<k_0 \delta^\gamma$ for any $t\geq-h,$ which implies $\|x(t)\|<\delta$ for any $t\geq-h.$ The result now follows from \eqref{eq:der_Razum}.
\begin{remark}
The value $H$ can be taken instead of $\delta$ in equation \textup{(\ref{eq:attraction_LR})} and in $K$ as it is done in \textup{\citet{aleksandrov2014delay}}. 
\end{remark}

\textbf{Proof of Theorem~\ref{th:estimation_LR}}. \citep{aleksandrov2012asymptotic,aleksandrov2016asymptotic}
Equations \eqref{eq:bound_V} and \eqref{eq:der_Razum} imply that along the solutions satisfying (\ref{razu_condition}) and $\|x_t\|_h\leq\delta$ the following differential inequality holds 
\begin{equation}\label{deriv_V}
   \frac{dV(x(t))}{dt}\leq -\overline{d}V^\frac{\gamma+\mu-1}{\gamma}(x(t)),\quad \overline{d}=k_5k_1^{-\frac{\gamma+\mu-1}{\gamma}}.
\end{equation}
Lemma~\ref{lemma_Raz} provides an initial condition for this inequality:
$$V(x(h))\leq k_1K^\gamma(\|\varphi\|_h+mh\|\varphi\|_h^\mu)^\gamma.$$
Introduce a parameter $\rho,$ which satisfies the following three conditions:
\begin{gather}
    0<\rho<\overline{d},\notag\\
    1+2h\rho\frac{\mu-1}{\gamma}k_0^\frac{\mu-1}{\gamma}\delta^{\mu-1}<\alpha^\frac{\mu-1}{\gamma},\label{firs_condition}\\
    \label{second_condition}
    1-\rho\frac{\mu-1}{\gamma}k_1^\frac{\mu-1}{\gamma}K^{\mu-1}h\Delta^{\mu-1}>0.
\end{gather}
Then, the differential equation 
\begin{equation}\label{eq:equa_comparison}
    \dot{z}(t)=-\rho z^{\frac{\gamma+\mu-1}{\gamma}}(t)
\end{equation} 
with the initial condition $z(h)=z_0,$ where $$z_0 = k_1K^\gamma(1+mh\Delta^{\mu-1})^\gamma\|\varphi\|_h^\gamma,\quad z_0<k_0\delta^\gamma,$$ can be treated as a comparison equation for \eqref{deriv_V}, if $t\geq h.$ A solution to this initial-value problem is 
\begin{equation*}
    z(t)=\frac{z_0}{\left(1+\rho\dfrac{\mu-1}{\gamma} z_0^\frac{\mu-1}{\gamma}(t-h)\right)^\frac{\gamma}{\mu-1}}.
\end{equation*}
Let us show that the solution $z(t)$ satisfies the Razumikhin condition \eqref{razu_condition}. Consider $t\geq h$ and $\xi$ such that $-2h\leq \xi \leq 0$ and $t+\xi\geq h,$ then condition \eqref{razu_condition} requires $z(t+\xi)<\alpha z(t)$, equivalently, 
\begin{equation}\label{eq:prove_rho}
    g(t)=\frac{1+k(t-h)}{1+k(t+\xi-h)}<\alpha^\frac{\mu-1}{\gamma},
\end{equation}
where $k=\dfrac{\rho(\mu-1)}{\gamma} z_0^\frac{\mu-1}{\gamma}$.
Note that
\begin{equation*}
    0\leq g(t)=1-\frac{k\xi}{1+k(t+\xi-h)}\leq 1-k\xi\leq 1+2kh.
\end{equation*}
Since $k< \rho\dfrac{\mu-1}{\gamma} k_0^\frac{\mu-1}{\gamma}\delta^{\mu-1}$, we have that condition \eqref{firs_condition} implies \eqref{eq:prove_rho}.
Hence, function
$z(t)$ satisfies the Razumikhin condition for all $t\geq h$. \\
Following the general idea of the Razumikhin framework, we conclude that $V(x(t))\leq z(t)$, $\,t\geq h$, for all solutions $x(t)$ with $\|\varphi\|_h<\Delta.$
This implies that the solutions with $\|\varphi\|_h<\Delta$ admit the following bound:
\begin{equation*}
    \|x(t,\varphi)\|\leq \frac{A\|\varphi\|_h}{(1+B\|\varphi\|_h^{\mu-1}(t-h))^\frac{1}{\mu-1}},\quad t\geq h,
\end{equation*}
where
\begin{align*}
    A&=\frac{K(1+mh\Delta^{\mu-1})}{\kappa}=\frac{\delta}{\Delta},\\
    B&=\rho\frac{\mu-1}{\gamma}k_1^\frac{\mu-1}{\gamma}\left(K(1+mh\Delta^{\mu-1})\right)^{\mu-1}.
\end{align*}
Furthermore, since $1-Bh\Delta^{\mu-1}>0$ due to condition \eqref{second_condition}, we have
\begin{gather*}
    \|x(t,\varphi)\|\leq \frac{A\|\varphi\|_h}{(1-B\|\varphi\|_h^{\mu-1}h+B\|\varphi\|_h^{\mu-1}t)^\frac{1}{\mu-1}}\leq \frac{\tilde{c}_1\|\varphi\|_h}{(1+\tilde{c}_2\|\varphi\|_h^{\mu-1}t)^\frac{1}{\mu-1}},
\end{gather*}
if $t\geq h,$ where
\begin{equation*}
    \tilde{c}_1=\frac{A}{(1-Bh\Delta^{\mu-1})^\frac{1}{\mu-1}},\quad \tilde{c}_2=\frac{B}{1-Bh\Delta^{\mu-1}}.
\end{equation*}

It remains to obtain the estimate for $t\in [0,h]$. According to Lemma~\ref{lemma_Raz},
\begin{equation*}
    \|x(t,\varphi)\|\leq A_1\|\varphi\|_h,\quad t\in [0,h],
\end{equation*}
where $A_1=K(1+mh\Delta^{\mu-1}).$
Notice that
\begin{equation*}
    \dfrac{(1+\tilde{c}_2\Delta^{\mu-1}h)^\frac{1}{\mu-1}A_1}{\tilde{c}_1}=\kappa\leq1.
\end{equation*}
Hence,
\begin{equation*}
    \|x(t,\varphi)\|\leq  \frac{\tilde{c}_1\|\varphi\|_h}{(1+\tilde{c}_2\Delta^{\mu-1}h)^\frac{1}{\mu-1}}
    \leq \frac{\tilde{c}_1\|\varphi\|_h}{(1+\tilde{c}_2\|\varphi\|_h^{\mu-1}t)^\frac{1}{\mu-1}},\quad t\in[0,h].
\end{equation*}
Therefore, the required bound \eqref{eq:final_est_LR} holds for all $t\geq0,$ and the proof is complete.

\end{document}